\newtheorem{theorem}{Theorem}
\newtheorem{Lemma}{Lemma}
\newtheorem{Definition}{Definition}
\newcommand\E{\mathbb{E}}
\newcommand\R{\mathcal{R}}
\def\RIS{\textsf{RIS}}
\def\TRIS{\textsf{TRIS}}
\def\SI{\textsf{SI}}
\def\ISI{\textsf{ISI}}
\def\IC{\textsf{IC}}
\def\LT{\textsf{LT}}
\def\SISI{\textsf{SISI}}
\def\RR{\textsf{RR}}
\def\TRR{\textsf{RR}}
\def\src{\textsf{src}}
\def\NETSLEUTH{\textsf{NETSLEUTH}}
\begin{document}

\makeatletter
\def\@copyrightspace{\relax}
\makeatother
%

%



%

\title{Multiple Infection Sources Identification with\\ Provable Guarantees}
%
%
%
%
%

\numberofauthors{3} 
%
\author{
	\alignauthor
	H. T. Nguyen, P. Ghosh\\
	\affaddr{CS Department}\\
	\affaddr{Virginia Commonwealth Univ.}\\
	\affaddr{Richmond, VA 23284, USA}\\
	\affaddr{\{hungnt, pghosh\}@vcu.edu} 
	\alignauthor
	M. L. Mayo\\
	\affaddr{US Army Engineer RD Center}\\
	\affaddr{3909 Halls Ferry Road, }\\
	\affaddr{Vicksburg, MS 39180, USA}\\
	\affaddr{Michael.L.Mayo@usace.army.mil}
	\alignauthor
	T. N. Dinh\\
	\affaddr{CS Department}\\
	\affaddr{Virginia Commonwealth Univ.}\\
	\affaddr{Richmond, VA 23284, USA}\\
	\affaddr{tndinh@vcu.edu} 
}

\maketitle
\begin{abstract}
Given an aftermath of a cascade in the network, i.e. a set $V_I$ of ``infected'' nodes after an epidemic outbreak or a propagation of rumors/worms/viruses, how can we infer the sources of the cascade? Answering this challenging question is critical for computer forensic, vulnerability analysis, and risk management.  Despite recent interest towards this problem, most of existing works  focus only on single source detection or simple network topologies, e.g. trees or grids. 

In this paper, we propose a new approach to identify infection sources by searching for a seed set $S$ that   minimizes the \emph{symmetric difference} between the cascade from $S$ and $V_I$, the given set of infected nodes.  Our major result is an approximation  algorithm, called  \SISI{}, to identify infection sources \emph{without the prior knowledge on the number of source nodes}. \SISI{}, to our best knowledge, is the first algorithm with \emph{provable guarantee} for the problem in general graphs. It returns a $\frac{2}{(1-\epsilon)^2}\Delta$-approximate solution with high probability, where  $\Delta$ denotes the maximum number of nodes in $V_I$ that may infect a single node in the network. 
Our experiments on real-world networks show the superiority of our approach and \SISI{} in detecting true source(s), boosting the F1-measure from few percents, for the state-of-the-art \NETSLEUTH{}, to approximately 50\%.
\end{abstract}

%
%

\begin{CCSXML}
<ccs2012>
<concept>
<concept_id>10002951.10003227.10003351</concept_id>
<concept_desc>Information systems~Data mining</concept_desc>
<concept_significance>500</concept_significance>
</concept>
<concept>
<concept_id>10003752.10003809.10003635</concept_id>
<concept_desc>Theory of computation~Graph algorithms analysis</concept_desc>
<concept_significance>500</concept_significance>
</concept>
<concept>
<concept_id>10003752.10003809.10003636</concept_id>
<concept_desc>Theory of computation~Approximation algorithms analysis</concept_desc>
<concept_significance>500</concept_significance>
</concept>
<concept>
<concept_id>10002950.10003648.10003671</concept_id>
<concept_desc>Mathematics of computing~Probabilistic algorithms</concept_desc>
<concept_significance>300</concept_significance>
</concept>
<concept>
<concept_id>10002950.10003714.10003716.10011136.10011137</concept_id>
<concept_desc>Mathematics of computing~Network optimization</concept_desc>
<concept_significance>300</concept_significance>
</concept>
<concept>
<concept_id>10003033.10003083.10003095</concept_id>
<concept_desc>Networks~Network reliability</concept_desc>
<concept_significance>300</concept_significance>
</concept>
<concept>
<concept_id>10003033.10003106.10003114.10011730</concept_id>
<concept_desc>Networks~Online social networks</concept_desc>
<concept_significance>300</concept_significance>
</concept>
</ccs2012> 
\end{CCSXML}

\ccsdesc[500]{Theory of computation~Graph algorithms analysis}
\ccsdesc[500]{Theory of computation~Approximation algorithms analysis}
\ccsdesc[300]{Mathematics of computing~Probabilistic algorithms}
\ccsdesc[300]{Mathematics of computing~Network optimization}
\ccsdesc[300]{Networks~Network reliability}
\ccsdesc[300]{Networks~Online social networks}

%
%

%
%


\keywords{Infection Source Identification, Approximation Algorithm.}
\renewcommand{\arraystretch}{1.2}
\setlength{\intextsep}{0.5\baselineskip}
\captionsetup{labelfont=bf}

\setlength\tabcolsep{3pt}

\section{Introduction}
\label{sec:intro}

The explosion of online social networks with billion of users such as Facebook or Twitter have fundamentally changed the landscapes of information sharing, nowadays.  Unfortunately, the same channels can be exploited to spread rumors and misinformation that cause devastating effects such as widespread panic in the general public \cite{Swine09}, diplomatic tensions \cite{Fox13}, and witch hunts towards innocent people \cite{Boston13}. 

Given a snapshot of the network with a set $V_I$ of \textit{infected nodes} who posted the rumors, identifying the set of nodes who initially spread the rumors is a challenging, yet important question, whether for forensic use or insights to prevent future epidemics. Other applications of infection source detection can be found in finding first computing devices that get infected with a virus or source(s) of contamination in water networks.

Despite recent interest towards this problem, termed Infection Sources Identifications (\ISI{}), most of existing works either limit to single source detection \cite{Luo13, Lokhov14} or simple network topologies, e.g. trees or grids, with ad hoc extensions to general graphs \cite{Lappas10,Shah11,Shah12,Luo13}. A recent work in \cite{Prakash12} provides an MDL-based method, called \NETSLEUTH{}, to detect both the number of infection sources and the sources themselves. However the proposed heuristics seems to only work well for grid networks and cannot detect any true infection source. Thus there is lack of a rigorous and accurate method to detect multiple infection sources in general graphs.

\begin{figure}[t]
	\centering
	\includegraphics[width=0.35\textwidth]{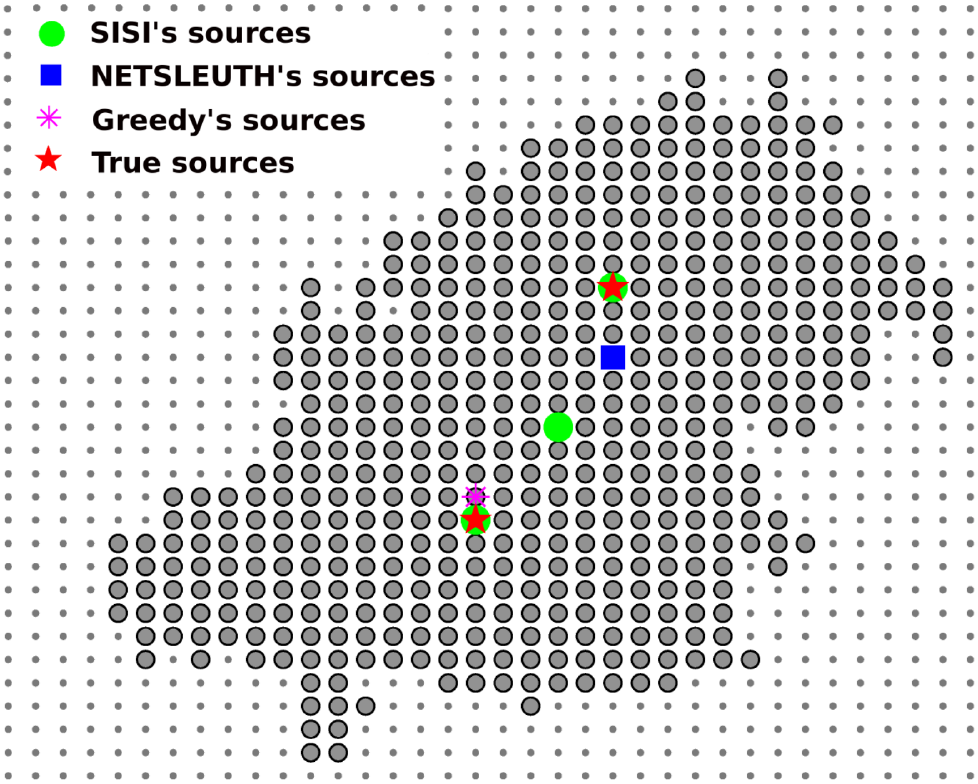}
	\caption{Infection sources detection on a $60\times 60$ grid graph}
	\label{fig:grid}
	\vspace{-0.2in}
\end{figure} 
In this paper, we present a new approach to identify \emph{multiple} infection sources that looks into both infected and uninfected nodes. This contradicts to existing methods \cite{Lappas10, Prakash12}  which limit the attention to the subgraph induced by the infected nodes. Given a snapshot of network $G=(V, E)$ and a set of infected nodes $V_I$, we identify the sources by searching for a set $\hat S$ that minimize the \emph{symmetric difference} between the cascade from $S$ and $V_I$. While our objective, the symmetric difference, is similar to the one used in \textsf{$k$-effector} \cite{Lappas10}, our novel formulation does not require the knowledge of the number of infection sources $k$. In deriving optimization method for this new approach, we face  \emph{strong challenges} in developing efficient solution:
\begin{itemize}
\vspace{-0.02in}
\item The \emph{exponential} number, up to $2^{\theta(n)}$ for large $V_I$, of possible solutions. This makes the exhaustive search for the case of single source \cite{Luo13, Lokhov14, Farajtabar15} intractable. 
\vspace{-0.02in}
\item The \emph{non-submodular} objective. Thus, it is inefficient to solve the problem through simple greedy  methods.
\vspace{-0.02in}
\item The \emph{stochastic} nature of the infection process requires considering exponentially many possible cascades.
\vspace{-0.05in}
\end{itemize}

To tackle \ISI{}, we propose  \SISI{}, an algorithm that can accurately detect infection sources. We employ in \SISI{} two novel techniques: a  \textit{Truncated Reverse Infection Sampling} (\TRIS) method to generate random reachability \TRR{} sets that encode the infection landscape and a primal-dual algorithm for the \textit{Submodular-cost Covering} \cite{Koufogiannakis13}. \SISI{}, to our best knowledge, is the \emph{first algorithm with provable guarantee} for multiple infection sources detection in general graphs. It returns an $\frac{2}{(1-\epsilon)^2}\Delta$-approximate solution with a high probability, where  $\Delta$ denotes the maximum number of nodes in $V_I$ that may infect the same node in the network.
Experiments on real-world networks show the huge leap of \SISI{} in detecting true infection sources, boosting the true source discovery rates from merely few percents, for the state-of-the-art \NETSLEUTH{}, to more than 70\%. Thus \SISI{} has both high empirical performance and theoretical guarantees. 

The advantages of \SISI{} over other methods are illustrated through a cascade on a $60\times 60$ grid in Fig.~\ref{fig:grid}. \SISI{} is the only one which can \emph{detect the true infection sources}. To avoid false negative, which is more serious than false positive, \SISI{} often output slightly more infection sources than other methods (\SISI: 3, \NETSLEUTH{}: 1, \textsf{Greedy}:1, Ground-truth: 2). However, it maintains a reasonable F1-score of over 50\%.

We summarize our contributions as follows
\vspace{-0.03in}
\begin{itemize}
  \item We propose a new approach to identify multiple infection sources through minimizing the symmetric difference between the cascade of the suspected source nodes $S$ with the infected nodes $V_I$ without knowing \textit{the number of sources a priori}. Our experiments show that methods following this approach including our algorithm \SISI{} and the greedy algorithm outperform the other approaches in terms of detecting true sources.
  \vspace{-0.03in}
  \item To our best knowledge, we propose the first approximation algorithm, termed \SISI{}, for detecting multiple infection sources in general graphs and our algorithm does not require the knowledge on the number of infection sources. Given an approximation error $\epsilon >0$, we  provide rigorous analysis on sample complexity, deriving the necessary number of samples to guarantee a multiplicative error $(1 \pm \epsilon)$ on the objective estimation.
  \vspace{-0.03in}
  \item Extensive experiments on real-world networks shows the superiority of \SISI{}  over other approaches in detecting the exact sources under both \SI{} and IC models. The relax version of \SISI{} is also faster than \NETSLEUTH{} while still retaining high-quality solutions.
\end{itemize}
\vspace{-0.03in}

\textbf{Related works.} Infection Source Identification (\ISI) under different names has recently emerged and attracted quite a number of researchers in multiple disciplines with diverse techniques. There are two main streams of works and methods that can be listed: 1) exact algorithms on tree graphs \cite{Shah11,Shah12,Luo13,Lappas10,Dong13}, 2) \textit{ad hoc} heuristics approaches without any guarantee for general graphs \cite{Prakash12,Luo13,Lokhov14}.

In the first stream, Shah and Zaman in \cite{Shah11} established the notion of rumor-centrality which is an Maximum Likelihood estimator on regular trees under the \SI{} model. They proposed an optimal algorithm to identify the single source of an epidemic. In \cite{Shah12}, the same authors improved the previous results by deriving the exact expression for the probability of correct detection. Later Luo et al. \cite{Luo13} based on approximations of the infection sequences count to develop an algorithm that can find at most two sources in a geometric tree. Since solely targeting trees, all these methods are unable of solving \ISI{} problem on general graphs.

Lappas et al. \cite{Lappas10} formulated \ISI{} problem under the name of \textsf{$k$-effector} and introduced the minimization of the symmetric difference between the observed infection and the resulting cascade if starting from a candidate source set.  While the formulation is novel, their solution is, unfortunately, limited to tree graphs and require the knowledge of the number of infection sources. The extension for general graphs by approximating a graph by a tree does not work well either as we show later in the experiment section.

Prakash et al. \cite{Prakash12} resort to heuristic approach to find multiple sources in general graphs and propose \NETSLEUTH{} which relies on the two-part code Minimum Description Length. They show that \NETSLEUTH{} is able to detect both the sources and how many of them. However, besides no guarantee on solution quality, we show in our experiments that \NETSLEUTH{} performs poorly on a simple grid graph with large overlapping region of cascades from two source nodes. Luo et al. \cite{Luo13} also derived an estimator to find multiple sources given that the maximum number of sources is provided. Yet similar to \cite{Lappas10}, their estimator depends on the approximation of a general graph to tree and also requires the maximum number of sources.

There are also other works on related areas: \cite{Karamchandani13} studies the rumor-centrality estimator on trees under an additional constraint that the status (infected or not) of a node is revealed with probability $p \leq 1$. In case of $p = 1$, the estimator is able to reproduce the previous results and with large enough $p < 1$, it achieves performance within $\epsilon$ the optimal. Under a different model, Chen et al. \cite{Chen14} study the problem of detecting multiple information sources in networks under the Susceptible-Infected-Recovered (SIR) model. They propose an estimator for regular trees that can detect sources within a constant distance to the real ones with high probability and investigate a heuristic algorithm for general cases. In another study \cite{Lokhov14}, Lokhov et al. take the dynamic message-passing approach under SIR model and introduce an inference algorithm which is shown to admit good improvement.

Influence maximization problem \cite{Kempe03} that find $k$ nodes to maximize the expected influence is one of the most extensively studied problem. The latest references on the problem can be found in \cite{Tang15} and the references therein.

\textbf{Organization.} We present our model and problem formulation in Section \ref{model}. The hardness and inapproximability results are provided in Section \ref{sec:hard}. The main algorithm \SISI{} is proposed in Section \ref{sec:submodular} while its performance guarantees and running time are analyzed in Section \ref{sec:approx}. We provide comparison on empirical performance of our algorithms and other approaches in Section \ref{sec:exp}. Conclusions and extensions for other settings are discussed in Section \ref{sec:conclusion}.
\vspace{-0.1in}

\section{Models and Problem definition}
\label{model}

We represent the network in which the infection spreads as a directed graph $G=(V, E)$ where $V$ is the set of $n$ nodes, e.g., computers in a computer network, and $E$ is the set of $m$ directed edges, e.g., connections between the computers. In addition, we are given a subset $V_I \subseteq V$ of observed infected nodes and the remaining nodes are assumed to be not infected and denoted by $\bar{V_{I}} = V\backslash V_{I}$.

\begin{table}[hbt]\small
	\centering
	\caption{Table of Notations}
	\vspace{-0.15in}
	\begin{tabular}{|p{1.9cm}|p{6.2cm}|}
		\addlinespace
		\toprule
		\bf Notation  &  \quad \quad \quad \bf Description \\
		\midrule 
		$n, m$ & \#nodes, \#edges of graph $G=(V, E)$.\\
		\hline
		$V_I, \bar V_I$ & Set of infected and uninfected nodes.\\
		\hline
		$\beta, k$ & Infection probability and $k = |V_I|$.\\
		\hline
		$V(S,\mathcal{M})$ & An infection cascade from $S$ under model $\mathcal{M}$.\\
		\hline
		$D(S,\mathcal{M},V_I)$ & Symmetric different on a graph realization.\\
		\hline
		$\E[D(S,\mathcal{M},V_I)]$ & The expectation of $D(S,\mathcal{M},V_I)$ over all realizations.\\
		\hline
		$\hat S$ & The returned source set of \SISI{}.\\
		\hline
		$OPT, S^*$ & The optimal value of $\E[D(S,\tau)]$ and an optimal solution set which achieves the optimal value.\\
		\hline
		$R_j, \src(R_j)$ & A random \TRR{} set and its source node $\src(R_j)$.\\
		\hline
		$\Delta$ & Maximum size of an \TRR{} set ($\Delta \leq V_I$).\\
		\hline
		$c, M$ & $c = 2(e-2) \approx \sqrt{2}$, $M = 2^k + 1$.\\
		\hline
		$ \Lambda$ & $\Lambda = (1 + \epsilon)2c(\ln \frac{2}{\delta} + k\ln 2 + 1) \frac{1}{\epsilon^2}$.\\
		\bottomrule
	\end{tabular}%
	\label{tab:syms}%
	\vspace{-0.1in}
\end{table}%
\subsection{Infection Model}
We focus on the popular \textit{Susceptible-Infected (\SI{})}  model.

\textbf{Susceptible-Infected (\SI) model.} In this infection model, each node in the network is in one of two states: 1) Susceptible (S) (not yet infected) and 2) Infected (I) (infected and capable of spreading the disease/rumor). Once infected, the node starts spreading to its neighbors through their connections. While the initial model were proposed for a complete graph topology \cite{Anderson92}, the model can be extended for arbitrary graph $G=(V, E)$. We assume that the infection spreads in discrete time steps. At time $t=0$, a subset of nodes, called the infection sources, are infected and the rest is uninfected. Once a node $u$ gets infected at time $t$, it will continuously try to infect its uninfected neighbor $v$ and succeed with probability $0 <\beta \leq 1$ from step $t+1$ onwards until successful. The single parameter $\beta$ indicates how contagious the infection is and thus the higher, the faster it contaminates the network. 

\textbf{Other cascade model.}   In principle, our formulation and proposed method will work for most progressive diffusion models in which once a node becomes infected, it stays infected. These include the two popular models \textit{Independent Cascade \IC{}} and \textit{Linear Threshold (\LT{})} models \cite{Kempe03}. Other non-progressive models can be first converted to a progressive ones as outlined in \cite{Chen13}. 

For simplicity, we present our method for the \SI{} model and discuss the extension to the \IC{} and \LT{} models through changing the sampling method in Subsection \ref{subsec:ris}.

\textbf{Learning model parameters.}  Learning propagation model parameters is an important topic and has received a great amount of interest \cite{Kimura09,Tang09,Goyal10,Liu10, Kutzkov13}. Our approaches can rely on  these learning methods to extract influence cascade model parameters from real datasets, e.g., action logs, connection networks.

\vspace{-0.05in}

\subsection{Problem Formulation}
Intuitively, given an infection model, denoted by $\mathcal M$, the goal of infection source identification is to identify a set of source nodes $S$ (unknown size) so that the resulting cascade originated from nodes in $S$, within a duration $\tau>0$, matches $V_I$ as closely as possible.

To formalize the problem, we define a cascade $V(S, \mathcal M)$ as the set of infected nodes if we select nodes in $S$ as the sources (initially infected) under infection model $\mathcal{M}$. Thus, the objective function which characterizes the aforementioned criteria, termed \textit{symmetric difference}, is defined as follows,
\begin{align}
	\label{eq:diff}
	D(S,\mathcal{M},V_I) = |V_{I}\backslash V(S,\mathcal{M})| + |\bar{V_{I}} \cap V(S,\mathcal{M})|
\end{align}
\captionsetup{labelfont=bf}
\begin{figure}[!ht] \centering
	\includegraphics[width=0.55\linewidth]{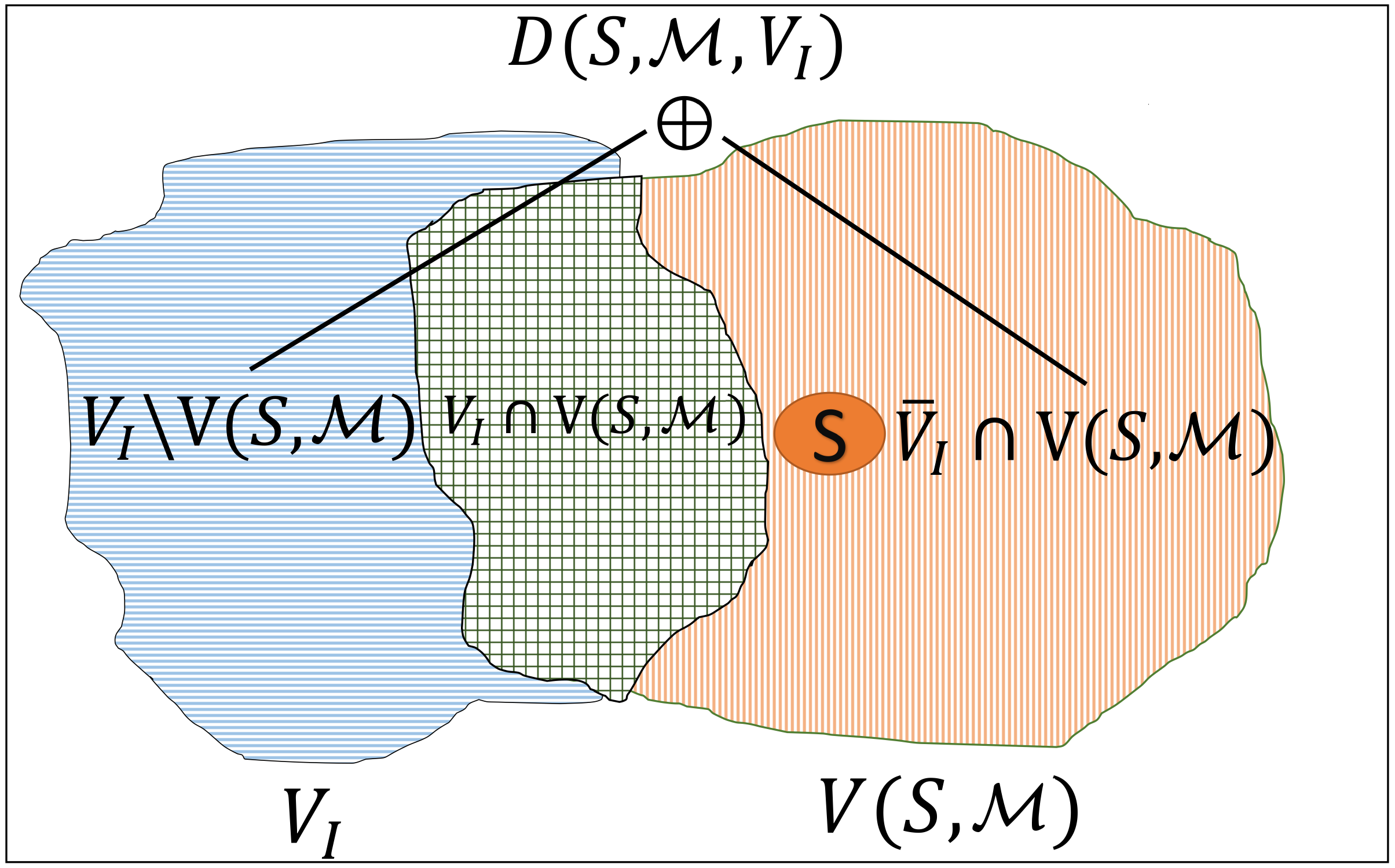}
	\caption{Illustration of symmetric difference.}
	\label{fig:formula}
\end{figure}

In Eq.~\ref{eq:diff}, the first term $|V_{I} \backslash V(S,\mathcal{M})|$ counts the number of nodes in $V_I$ that are not infected by the propagation spreading from $S$ within a duration $\tau$ and the second term indicates the number of nodes that are ``mistakenly'' infected during the same time interval (illustrated in Fig.~\ref{fig:formula}).  Together, the sum measures the similarity between the observed cascade $V_I$ and the cascade causes by the suspected nodes $S$. 

Due to the stochastic nature of the cascade, there are exponentially many possible cascades  for a given  set of source nodes $S$. Here cascade is used to refer to the set of infected nodes within $\tau$ steps. To account for this, we  aggregate the symmetric difference over the probabilistic space of the possible cascades spreading from $S$.  Denote by $\Pr[V(S,\mathcal{M})]$, the probability of receiving a particular cascade $V(S, \mathcal M)$ within $t=\tau$ time steps. We compute the expected symmetric difference as follows,
\begin{align}
	&\mathbb{E}[D (S,\mathcal{M},V_I)] = \sum_{\text{possible } V(S,\mathcal{M})}D(S,\mathcal{M},V_I) \Pr[V(S,\mathcal{M})] \nonumber \\
	&=\sum_{\text{possible } V(S,\mathcal{M})} (|V_{I}\backslash V(S,\mathcal{M})| + |\bar{V_{I}} \cap V(S,\mathcal{M})|)\Pr[V(S,\mathcal{M})] \nonumber \\
	&= \sum_{u \in V_{I}} \Pr[u \text{ not infected by } S] + \sum_{v \notin V_{I}}\Pr[v \text{ infected by } S]
	\label{eq:diff_2}
\end{align}
\noindent In the last equation, the `infected' and `not infected' probabilities are w.r.t. a random cascade from $S$ within $\tau$ steps. 

We now state the problem of identifying the infection sources as follows.
\begin{Definition}[Infection Sources Identification]
	\label{def:prob_1}
	Given a graph $G=(V,E)$, infection model $\mathcal{M}$ (e.g., $\beta$ for \SI{} model), observation set $V_I$ of infected nodes, and the duration of the cascade $\tau$ (could be infinity), the Infection Sources Identification (\ISI{}) problem asks for a set $\hat S$ of nodes such that,
	\begin{align}
		\label{eq:obj}
		\hat S = \arg\min_{S \subseteq V_{I}} \mathbb{E}[D(S,\mathcal{M},V_I)]
	\end{align}
\end{Definition}
While this formulation is similar to \cite{Lappas10}, we do not require knowledge on the number of infection sources.
\vspace{-0.05in}

\subsection{Hardness and Inapproximability}
\label{sec:hard}
This subsection shows the NP-hardness and inapproximability results of the \ISI{} problem. From Def.~\ref{def:prob_1}, there are two major difficulties in finding the sources: 1) first, by a similar argument to that of the influence maximization problem in \cite{Kempe03}, the objective function is \#P-hard to compute exactly; 2) second, the objective is non-submodular, i.e., there are no easy greedy approaches to obtain approximation algorithms. In fact, we show a stronger inapproximability result in the below theorem.

\begin{theorem}
	\label{theo:inapprox}
    \ISI{} cannot be approximated within a factor $O(2^{\log^{1-\epsilon}n})$ for any $\epsilon > 0$, where $n = |V|$, unless NP $\subseteq$ DTIME($n^{polylog(n)}$).
\end{theorem}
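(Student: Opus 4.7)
The plan is to establish inapproximability by a gap-preserving reduction from Minimum-Rep (Min-Rep), for which Kortsarz's theorem rules out any $O(2^{\log^{1-\epsilon}n})$-approximation under precisely the hypothesis NP $\not\subseteq$ DTIME$(n^{\text{polylog}(n)})$. To strip the stochastic layer and expose the combinatorial core, I would first specialize to the deterministic slice of the \SI{} model with $\beta = 1$ and $\tau = \infty$: then $V(S,\mathcal M)$ equals the set of nodes reachable from $S$ and $\mathbb{E}[D(S,\mathcal M,V_I)]$ collapses to the symmetric difference $|V_I \triangle V(S,\mathcal M)|$, so any hardness proved in this restricted regime transfers to the general problem.

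Given a Min-Rep instance on groups $A=\bigcup_i A_i$, $B=\bigcup_j B_j$ with super-edge set $\mathcal E$, I would construct the following directed graph. For every $a\in A\cup B$ create a source-candidate node $v_a$ and place it in $V_I$. For each super-edge $(A_i,B_j)\in\mathcal E$, attach a verification gadget consisting of a bundle $T_{ij}$ of $N$ duplicate target nodes inside $V_I$; the wiring is arranged so that $T_{ij}$ is reachable from $S$ iff $S$ contains some pair $v_a,v_b$ with $a\in A_i$, $b\in B_j$, and $(a,b)\in E$. In parallel, attach to every $v_a$ a private leakage path into $\bar{V_I}$ that charges a unit penalty whenever $v_a$ is selected. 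With the bundle size $N$ chosen polynomially larger than $|A\cup B|$, the objective $\mathbb{E}[D(S,\mathcal M,V_I)]$ becomes essentially $|S|+N\cdot(\#\text{uncovered super-edges})$ plus a fixed additive constant, so its minimum matches the Min-Rep optimum up to an affine factor, and any $\alpha$-approximation for \ISI{} translates into an $O(\alpha)$-approximation for Min-Rep.

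The main obstacle is calibrating the gadget so that the reduction is faithfully gap-preserving. In particular, I must verify: (i) the "bundle reachable iff super-edge witnessed" property, so that a low-cost \ISI{} solution can be decoded into a Min-Rep cover of comparable size by reading off the selected $v_a$; (ii) the leak tails do not create parasitic reachabilities between distinct super-edge gadgets, nor allow $S$ to pick up free coverage of bundles by infecting too many $V_I$ nodes through unintended paths; and (iii) the total construction remains polynomial in the Min-Rep instance, so that $2^{\log^{1-\epsilon}n}$ on the reduced side still exceeds the Min-Rep threshold after the $n \mapsto \text{poly}(n)$ blow-up. Once these bookkeeping invariants are in place, composing the reduction with Kortsarz's theorem completes the argument; the constraint $S\subseteq V_I$ is automatic since every source-candidate has been planted in $V_I$ by construction.
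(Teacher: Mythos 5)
Your overall strategy --- restrict to the deterministic slice $\beta=1$ of the \SI{} model so that $\mathbb{E}[D(S,\mathcal M,V_I)]$ becomes a combinatorial symmetric difference, then give a gap-preserving reduction from a Label-Cover-hard minimization problem --- is the same skeleton the paper uses (the paper reduces from Red-Blue Set Cover with $\beta=1,\tau=1$, whose $O(2^{\log^{1-\epsilon}N})$-hardness is itself of the Kortsarz/Min-Rep type). However, your specific choice of Min-Rep as the source problem creates a gap that your sketch does not close, and it is exactly the point you flag as obstacle (ii).

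The difficulty is that Min-Rep's objective is the \emph{cardinality of the selected set}, while the symmetric difference never charges for $|S|$ directly: it charges one unit for each node of $V_I$ the cascade misses and one unit for each node of $\bar{V_I}$ it hits. Your ``private leakage path that charges a unit penalty whenever $v_a$ is selected'' cannot be realized in a progressive model, because a leakage neighbor of $v_a$ fires whenever $v_a$ is \emph{infected}, not only when it is chosen as a source; and if you instead keep the $v_a$'s mutually unreachable so that only selected candidates are infected, then each unselected $v_a$ contributes $+1$ to $|V_I\setminus V(S)|$ while each selected one contributes $+1$ through its leak, so the two terms sum to the constant $|A\cup B|$ and the dependence on $|S|$ cancels. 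The standard fix (e.g., two leakage neighbors per candidate) yields an objective of the form $|S|+|A\cup B|+N\cdot(\#\text{uncovered super-edges})$, but the additive term $|A\cup B|$ is in general much larger than the Min-Rep optimum $k^*$, so an $\alpha$-approximation of the \ISI{} objective does not translate into an $O(\alpha)$-approximation of $k^*$ and the multiplicative gap is destroyed. The paper avoids this mismatch by starting from Red-Blue Set Cover, whose objective --- the number of \emph{red elements covered} --- maps natively onto $|\bar{V_I}\cap V(S)|$ (red elements become uninfected nodes in a set $U_0$), while the ``cover all blues'' constraint is enforced by attaching $|R|+1$ penalty nodes to each blue gadget so that any violation already costs more than the trivial solution. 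To repair your proof you would either need to switch to a source problem whose cost is a coverage count rather than a selection count, or design a genuinely asymmetric selection gadget, which the symmetric-difference objective does not appear to admit.
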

\begin{proof}
To prove Theo.~\ref{theo:inapprox}, we construct a gap-preserving polynomial-time reduction which reduces any instance of the Red-Blue Set Cover problem \cite{Carr00} to an instance of \ISI{}. The Red-Blue Set Cover problem is defined as follows: an instance of Red-Blue Set Cover problem consists of two disjoint sets: $R = \{r_1,...,r_p\}$ of red elements, $B = \{b_1,...,b_q\}$ of blue elements, and a family $T \subseteq 2^{R\cup B}$ of $n (n \geq p, n \geq q)$ subsets of $R\cup B$. The problem asks a subfamily $C^* \subseteq T$ of subsets that covers all the blue elements but minimum number of reds,
\begin{align}
	C^* = \arg\min_{C \subseteq T} \{|R\cap (\cup_{i=1}^{|C|} T_{i})|\}
\end{align}

Our polynomial reduction ensures that if the \ISI{} instance has an $O(2^{\log^{1-\epsilon}n})$-approximate solution $S$, then there must be a corresponding $O(2^{\log^{1-\epsilon}n})$-approximate solution of the Red-Blue Set Cover polynomially induced from $S$. The reduction is grounded on the observation that any solution of the Red-Blue Set Cover costs at most $p$ - the number of red elements. Then, based on the result in \cite{Carr00} that the Red-Blue Set Cover cannot be approximated within a factor $O(2^{\log^{1-\epsilon}N})$ where $N = n^4$ for any $\epsilon > 0$ unless NP $\subseteq$ DTIME($N^{polylog(N)}$), we obtain the Theorem~\ref{theo:inapprox}.

We will give a polynomial reduction from an instance of the Red-Blue Set Cover to an \ISI{} instance with $\beta = 1$ and $\tau = 1$ such that,
\begin{itemize}
	\item[(\textbf{1})] The optimal solution of the \ISI{} instance polynomially infers the optimal solution for the instance of Red-Blue Set Cover.
	\item[(\textbf{2})] If we obtain an $O(2^{\log^{1-\epsilon}n})$-approximate solution for \ISI{}, we will also have an $O(2^{\log^{1-\epsilon}n})$-approximate solution for the Red-Blue Set Cover instance.
\end{itemize}
These two conditions are sufficient to conclude that we cannot approximate the optimal solution of \ISI{} within a factor $O(2^{\log^{1-\epsilon}n})$ unless we can do that for Red-Blue Set Cover. Thus, the Theorem~\ref{theo:inapprox} follows. We will present the reduction and then prove the satisfaction of each condition.

\captionsetup{labelfont=bf}
\begin{figure}[!ht]
	\includegraphics[width=\linewidth]{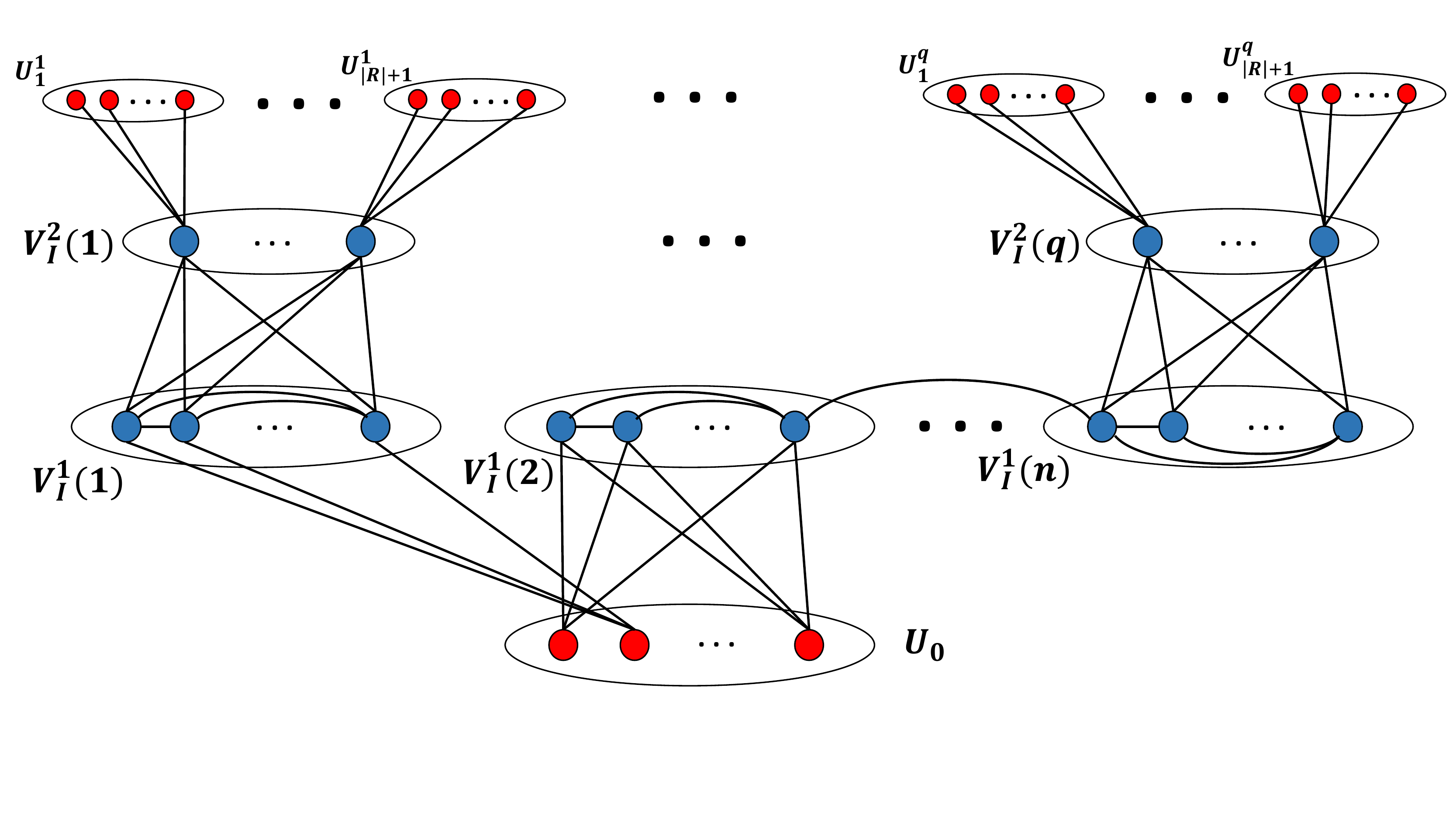}
	\vspace{-0.45in}
	\caption{Reduction from Red-Blue Set Cover to \ISI{} in which infected nodes are blue-colored and uninfected nodes are in red.}
	\label{fig:reduce}
\end{figure}

Given an instance of Red-Blue Set Cover with two sets $R, B$ and a family $T$, we suppose all the subsets in $T$ contains at least a blue element, otherwise we can trivially discard all those subsets since we never select that type of subsets. In the reverse way, we also suppose every pair of subset in $T$ has at least one red element different from each other. Otherwise we always select/reject both at the same time without changing the cost, in other words, we can group together to create one subset. We construct a corresponding \ISI{} instance consisting of the node set $V$, the infected set $V_I \subseteq V$ and the set of edges $E$ as follows (depicted in Fig.~\ref{fig:reduce}):
\begin{itemize}
	\item Set of infected nodes $V_I$: For each subset $T_i \in T$, there is a set $V_I^1(i)$ of infected nodes whose number is the number of blues in $T_i$. For each blue node $B_j$ in $B$, we form a set $V_I^2(j)$ of $|R|+1$ infected nodes.
	\item Set of uninfected nodes $V \backslash V_I$: For each infected node $l$ in $V_I^2(j)$, a set $U_l^j$ of $|R|+1$ uninfected nodes is constructed. We also have a set $U_0$ of $p$ uninfected nodes corresponding to the red set $R$ in Red-Blue Set Cover instance.
	\item Set of edges $E$: For any pair $(u,v) \in V_I^1(i)$, we connect them by an edge, so that the subgraph of nodes in $V_I^1(i)$ is a clique. For each $u \in T_i \cap T_j$, we connect the two corresponding nodes in $V_I^1(i)$ and $V_I^1(j)$ by an edge. For each $u \in V_I^1(i)$, we connect $u$ to all $|R|+1$ nodes in $V_I^2(u)$ and, subsequently, each node $l$ in $V_I^2(u)$ is connected to all $|R|+1$ nodes in $U_l^u$. For any pair $u,v \in V_I^1(j)$ for each $j \in \{1,...,n\}$, we connect $u$ with all the nodes in $V_I^2(v)$ and $v$ with all the nodes in $V_I^2(u)$. If the subset $T_i$ contains red element $R_j$, then for each $u \in V_I^1(i)$, there is an edge connecting $u$ to the corresponding node of $R_j$ in $U_0$.
\end{itemize}

Now, we will prove the two conditions consecutively. Our proof relies on two observations: the first one is that if the feasible solution for \ISI{} contains at least an infected node from $V_I^2(j)$ for some $j \in \{1,...,q\}$, then the number of uninfected nodes covered is at least $|R|+1$ which causes the cost to be at least $|R|+1$. The same phenomenon happens if an infected node $v$ in $V_I^1(j)$ for some $j \in \{1,...,n\}$ is not covered since there would be $|R|+1$ infected nodes in $V_I^2(v)$ not covered. On the other hand, if all the infected nodes in $V_I^1(j)$ for all $j \in \{1,...,n\}$ are covered, then all infected nodes in the whole network are indeed covered and at most $|R|$ uninfected nodes (in $U_0$) are also covered. The second observation with the previous case is that in the original Red-Blue Set Cover instance, we select those subset $T_i$ such that the corresponding $V_I^1(i)$ contains a infection source chosen in \ISI{}, then the cost in the two problem are equal (cover the same number of red elements/uninfected nodes).

\textit{Prove condition \textbf{(1)}}. Based on our observation, the optimal solution $S^*$ of the \ISI{} instance has to cover all the nodes in $V_I^1(j)$ for all $j$ and has the least number of uninfected nodes covered. From this solution, we construct the solution for the original Red-Blue Set Cover instance by selecting the subfamily $C^*$ of subsets $T_i$ such that the corresponding $V_I^1(i)$ contains a infection source in the optimal solution of \ISI{}. First, this subfamily covers all the blue elements since each blue element corresponds to some infected nodes in $V_I^1(j)$ for some $j$. Secondly, if this subfamily has the lowest cost (covers the least number of red elements). Otherwise, suppose that a different subfamily $\hat C$ has lower cost, then we can equivalently find another solution for the reduced \ISI{} instance and obtain the same cost (lower than that of $S^*$). That contradicts with the optimality of $S^*$.

\textit{Prove condition \textbf{(2)}}. Based on condition \textbf{(1)} that the optimal solution of \ISI{} instance infers the optimal solution of Ref-Blue Set Cover with the same cost. Suppose we have an $O(2^{\log^{1-\epsilon}n})$-approximate solution $\hat S$ for Red-Blue Set Cover instance, there are two possible cases:
\begin{itemize}
	\item If $\hat S$ contains a node in $V_I^2(j)$ for some $j$ or $\hat S$ does not cover a node in $V_I^1(j)$ for some $j$, then based on the first observation, the cost of $\hat S$ has to be at least $|R| + 1$. Because this is an $O(2^{\log^{1-\epsilon}n})$-approximate solution, we just select the whole family $T$ in Red-Blue Set Cover instance which has cost of only $|R|$ and obtain an $O(2^{\log^{1-\epsilon}n})$-approximate solution.
	\item Otherwise, based on the second observation, we can easily construct a solution for Red-Blue Set Cover with equal cost and thus obtain an $O(2^{\log^{1-\epsilon}n})$-approximate solution.
\end{itemize}

Lastly, note that the number of blue and red elements must be at least $|T|$, otherwise we can drop or merge some sets together without effecting any solution. Thus, by following our construction of the \ISI{} instance, we determine the number of uninfected nodes,
\begin{align}
	|V| &= |U_0| + \sum_{i=1}^{n}|V_I^1(i)| + \sum_{i=1}^{q} |V_I^2(i)|\cdot\sum_{j = 1}^{|R+1|} |U^i_j| \nonumber \\
	& \leq |T| + |T| + |T|^2(|T|+1) \leq |T|^4 \text{ \ \ \ }(\text|T| \geq 4)
\end{align}

Since $|T| = n$ and the Red-Blue Set Cover cannot be approximated within a factor of $O(2^{\log^{1-\epsilon}N})$ where $N = n^4$ for any $\epsilon > 0$ unless NP $\subseteq$ DTIME($N^{polylog(N)}$), we obtain our results in Theo.~\ref{theo:inapprox}.
\end{proof}

\vspace{-0.05in}
\section{Sampling-based SISI algorithm}
\label{sec:submodular}
In this section, we present \SISI{}, our sampling-based method with guarantee on achieving $\frac{2}{(1-\epsilon)^2}\Delta$-approximation factor for arbitrary small $\epsilon>0$. Here $\Delta$ equals the maximum nodes in $V_I$ that can infect a single node in the graph and is the same with the maximum sample size in Subsec.~\ref{subsec:ris}.

\textbf{Outline}. \SISI{} contains two key components: 1) an efficient Truncated Reverse Infection Sampling (\TRIS{}) to compute the objective with high accuracy and confidentiality (presented in Subsection~\ref{subsec:ris}) and 2) an innovative transformation of the studied problem into a submodular-cost covering problem to provide high quality solutions with performance guarantees (presented in Subsection~\ref{subsec:sub}). We show the combination of the two components to obtains the \SISI{} algorithm in Subsection~\ref{subsec:main_alg}.


\vspace{-0.05in}
\subsection{Truncated Reverse Infection Sampling}
\label{subsec:ris}
We propose the \emph{Truncated Reverse Infection Sampling} (\TRIS{}) strategy to generate random Reverse Reachable  (\RR{}) sets, following the \emph{reverse influence sampling} method (\RIS{}) pioneered in \cite{Borgs14}.
A \RR{} set, $R_{j}$,  is generated as follows.
\vspace{-0.03in}
\begin{Definition}[Reverse Reachable set (\RR{} set)]
	\label{def:ris}
	Given $G=(V,E)$, probability $\beta$ and propagation time $\tau$, a \RR{} set is generated from $G$ by 1) selecting a (uniformly) random source node $v \in V$, 2) generating a reverse random cascade from $v$ in $G$ within $\tau$ steps and 3) returning $R_{j}$ as the set of nodes in the cascade.
\end{Definition}
\vspace{-0.03in}

The main intuition is that each \RR{} set $R_j$ contains the nodes that can infect its source $v=\src(R_j)$ within a given time $\tau$. Thus \RR{} sets were used in previous works \cite{Borgs14,Tang15,Nguyen16} (without the step/time limit $t$)  to estimate influence of nodes. We shall show later in next subsection that \RR{} sets can also be fine-tuned to estimate the chance of being infection sources.

Note that the above description of generating \RR{} sets is model-independent, i.e., you can use it with many different cascade models for reverse cascade simulation in the step 2. For example, the reverse simulation for  \IC{} and \LT{}, the two most  popular cascades models, are presented in \cite{Borgs14} and \cite{Nguyen162}, respectively. Here we focus on the reverse sampling for \SI{} model and highlight the necessary changes to make the method work for our problem.

\begin{algorithm}
	\caption{\textsf{Fast-TRIS}}
	\label{alg:ris_fast}
	\KwIn{Graph $G$, probability $\beta$, max time $\tau$ and $V_I$}
	\KwOut{A random \TRR{} set $R_{j}$}
	Pick a random node $u \in V$\\
	\TRR{} set $R_j = \{u\}$\\
	Infection time $T\{v\} = \infty $, $\forall v \in V\backslash\{u\}$, $T\{u\} = 0$\\
	Min priority queue $PQ = \{u\}$\\
	\While{$PQ$ not empty}{
		$u = PQ.pop()$\\
		\ForEach{ $v \in (\text{in-neighbors}(u) \backslash R_{j}) \cup PQ$}{
			$r \leftarrow$ a random number in [0,1]\\
			$t \leftarrow \lceil \log_{1-\beta}(1-r)\rceil$ \text{  } \{Assume $0 < \beta < 1$\}\\
			$T(v) = \min\{T(v), T(u) + t\}$\\
			\If{$T(v) < \tau$}{
				\eIf{$v \notin R_j$}{
					\If{$v \in V_I$}{
						$R_j = R_j \cup \{v\}$\\
					}
					$PQ.push(v)$\\
				}{
				$PQ.update(v)$\\
			}
		}
	}
}
Return $R_{j}$\\
\end{algorithm}

\subsubsection{Generating \RR{} Sets under \SI{} model.} The main difference between \SI{} model vs. \LT{} and IC{} models are \SI{} model allows multiple attempts for an infected node to its neighbors in contrast to a single attempt in \IC{} and \LT{}. Given a network $G=(V, E)$ and infection  probability $0< \beta \leq 1$, \RR{} sets in the \SI{} model are generated as follows.
\begin{itemize}
	\vspace{-0.05in}
	\item[1)] Select a random node $u$. Only $u$ is infected at time $0$ and all other nodes are not infected.
	\vspace{-0.05in}
	\item[2)] For each time step $i \in [1,\tau]$, consider all edges $(u, v) \in E$ in which $v$ is infected and $u$ is not infected (note the direction). Toss a $\beta$-head biased coin to determine whether $u$ succeeds in infecting $v$. If the coin gives head (with a probability $\beta$), we mark $u$ as infected.
	\vspace{-0.05in}
	\item[3)] After $\tau$ steps, return $R_j$ as the set of infected nodes, \emph{removing all nodes that are not in $V_I$}.
	\vspace{-0.05in}
\end{itemize}
Note the last step, the nodes that are not in $V_I$ will be removed from the \RR{} set (hence the name truncated). This truncation is due to the observation that the suspected nodes must be among the infected nodes in $V_I$. Our \RR{} sets are in general smaller than the \RR{} sets in \cite{Borgs14} and might be empty. This saves us a considerable amount of memory in storing the \RR{} sets.

A naive implementation of the above reserve sampling has a high complexity and does not scale when $\tau$ grows, thus we present a fast implementation using geometric distribution in Algorithm \ref{alg:ris_fast}. 

The complete pseudocode for the fast \TRIS{} algorithm is described in Alg.~\ref{alg:ris_fast}. The key observation to speed up the \TRIS{} procedure is that each trial in the sequence of infection attempts is a Bernoulli experiment with success probability of $\beta$. Thus this sequence of attempts until successful actually follows a geometric distribution. Instead of tossing the Bernoulli coin many times until getting a head, we can toss once and use the geometric distribution to determine the number of Bernoulli trials until successful (Lines~8,9). 

Another issue is the order of attempts since a node can be infected from any of her in-neighbors but only the earliest one counts. Therefore, we will keep the list of all newly infected nodes in a min priority queue (PQ) w.r.t infection time. In each iteration, the top node  is considered (Lines~6). The algorithm behaves mostly like the legacy Dijkstra's algorithm \cite{Goldberg96} except we have time for a node $w$ to infect a node $v$ on each edge $(w, v)$ instead of the length. Also, the algorithm is constrained within the region consisting of nodes at most `distance' $\tau$ from the selected $u$.


The time complexities of the naive and fast implementation of \TRIS{} are stated in the following lemma.
\begin{Lemma}
	\label{lem:ris_com}
	Expected time complexity of the naive \TRIS{} is,
	\begin{align}
		\label{eq:ris_com1}
		C(R_j) = \frac{\Delta m \tau}{n}
	\end{align}
	and that of the fast implementation is,
	\begin{align}
		\label{eq:ris_fast}
		C'(R_j) = \frac{\Delta m}{n} + \Delta\log (\Delta)\log (1 + \frac{\Delta m}{n^2})
	\end{align}
\end{Lemma}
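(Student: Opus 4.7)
The plan is to decompose the expected cost per RR set into edge-exploration work and (in the fast version) priority-queue bookkeeping, and to bound each piece using the standard reverse-influence-sampling symmetry: because $u=\src(R_j)$ is drawn uniformly from $V$, any vertex-additive statistic $\sum_{v\in R_j} f(v)$ can be rewritten as an average over $V$ and controlled via the size cap $|R_j|\le \Delta$ and the edge budget $\sum_{v\in V} d_{in}(v)=m$. Throughout I would work in expectation over both the choice of root and the coin tosses.

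For Eq.~\ref{eq:ris_com1}, the naive implementation re-scans every edge $(w,v)$ whose head $v$ is currently infected and whose tail $w$ is not, at every one of the $\tau$ time steps, tossing a single $\beta$-biased Bernoulli coin per edge per step. The per-RR-set work is therefore bounded by $\tau\sum_{v\in R_j} d_{in}(v)$. The RIS symmetry argument together with $|R_j|\le\Delta$ gives $\E[\sum_{v\in R_j} d_{in}(v)]\le \Delta m/n$, and multiplying by $\tau$ produces the claimed $\Delta m\tau/n$.

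For Eq.~\ref{eq:ris_fast}, I would split the cost into (i) edge inspections and (ii) priority-queue operations. (i) The geometric-sampling trick in Lines 8--9 collapses the up-to-$\tau$ repeated Bernoulli trials on a single edge into one draw, so the number of edge inspections equals $\sum_{v\in R_j} d_{in}(v)$, whose expectation is $\Delta m/n$ by the same symmetry argument, giving the first summand. (ii) Each PQ push/update costs $O(\log|\mathrm{PQ}|)=O(\log \Delta)$, so it suffices to show that the expected number of PQ operations is $O(\Delta\log(1+\Delta m/n^2))$. Fix $v\in R_j$: the number of updates to $v$ equals the number of strict improvements of its tentative infection time $T(v)$ as its in-neighbors are popped from $PQ$. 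Letting $k_v$ denote the number of those in-neighbors that actually enter the exploration, the standard fact that the expected number of record minima among $k$ i.i.d.\ continuous samples is $\Theta(\log(1+k))$ yields $\E[\#\text{updates to }v \mid k_v]=O(\log(1+k_v))$. Applying the RIS symmetry once more to in-neighbors landing in $R_j$ gives $\E[\sum_{v\in R_j} k_v]=O(\Delta^2 m/n^2)$; Jensen's inequality on the concave function $x\mapsto \log(1+x)$ then yields $\E[\sum_{v\in R_j}\log(1+k_v)]=O(\Delta\log(1+\Delta m/n^2))$, producing the second summand after multiplying by the $O(\log\Delta)$ per-operation cost.

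The main obstacle will be the priority-queue bound, specifically pinning down the $\log(1+\Delta m/n^2)$ factor. The record-minima lemma and the RIS symmetry are each standard, but combining them requires careful handling of the correlation between $R_j$ and the set of in-neighbors popped into $PQ$, since both are determined by the same reverse cascade and the geometric edge delays. I would address this by conditioning on the final set $R_j$ first and then invoking the symmetry argument on the conditional distribution, with Jensen applied last to exchange the expectation with $\log(1+\cdot)$. The remaining ingredients---the edge-scan counts and the per-operation cost of a binary-heap $PQ$ of size at most $\Delta$---are routine in comparison.
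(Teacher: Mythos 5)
Your treatment of the naive bound and of the edge-examination term of the fast implementation is essentially the paper's own argument: average over the uniformly random root, write the work as $\tau\sum_{v\in R_j}d^{in}(v)$ (resp.\ $\sum_{v\in R_j}d^{in}(v)$ after the geometric-sampling collapse), and use the symmetry bound $\sum_{u\in V}\Pr[u,v]\le\Delta$ to get $\E\big[\sum_{v\in R_j}d^{in}(v)\big]\le \Delta m/n$. Where you genuinely diverge is the priority-queue term: the paper does not derive it at all, it simply imports the expected-performance formula for Dijkstra from its reference [Goldberg96], $C'(R_j)=C(edges)+\Delta\log(\Delta)\log(1+C(edges)/n)$, and substitutes $C(edges)\le\Delta m/n$. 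Your plan to re-derive that term from first principles via record minima is a reasonable alternative, and the record-minima step itself is sound (the offers to $v$ are $T(u)+t$ with $T(u)$ nondecreasing along the pop order and $t$ i.i.d., so each improvement forces a new record minimum among the i.i.d.\ delays, giving $\E[\#\text{updates to }v\mid k_v]=O(H_{k_v})=O(\log(1+k_v))$).

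The gap is in the step $\E\big[\sum_{v\in R_j}k_v\big]=O(\Delta^2 m/n^2)$. That bound is what you would get if membership of the two endpoints of an edge in the explored region were independent events, each of probability $O(\Delta/n)$; but these events are strongly positively correlated, since a node enters the exploration precisely through its explored neighbors, so ``RIS symmetry applied once more'' does not deliver a product of marginals. The bound you can actually certify is $\E\big[\sum_{v}k_v\big]\le C(edges)\le \Delta m/n$ (every counted pair is an examined edge), and pushing that through your Jensen step with $|R_j|\le\Delta$ yields $\Delta\log(\Delta)\log(1+m/n)$ rather than the stated $\Delta\log(\Delta)\log(1+\Delta m/n^2)$. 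Since $\Delta m/n^2\le m/n$, your argument as written proves a weaker statement than Eq.~\ref{eq:ris_fast}. To close the gap you must either rigorously justify the quadratic bound on $\E\big[\sum_v k_v\big]$ (which I do not see how to do without further assumptions), settle for the $\log(1+m/n)$ factor, or do as the paper does and invoke the [Goldberg96] formula wholesale with $n$ and $m$ replaced by $\Delta$ and $C(edges)$.
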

\begin{proof}
	Similar to the analysis of Expected Performance of Dijkstra's Shortest Path Algorithm in \cite{Goldberg96} and denote the expected complexity of the fast algorithm by $C'(R_j)$, we have,
	\begin{align}
		\label{eq:total_com}
		C'(R_j) = C(edges) + \Delta\log (\Delta)\log (1 + C(edges)/n)
	\end{align}
	where $C(edges)$ is the expected number of edges examined. Note that this is different from $C(R_j)$ since in this case, each edge can be checked once while, for the latter, it is multiple until successful. $\Delta$ is defined previously as the maximum size of a RR set. We also have,
	\begin{align}
		C(edges) \leq \frac{1}{n}\sum_{u \in V} \sum_{v \in V} \Pr[u,v] d^{in}(v) \leq \frac{\Delta m}{n}
	\end{align}
	in which the details are similar to that of Eq.~\ref{eq:ris_com1}. Thus, combining with Eq.~\ref{eq:total_com}, we obtain,
	\begin{align}
		C'(R_j) = \frac{\Delta m}{n} + \Delta\log (\Delta)\log (1 + \frac{\Delta m}{n^2})
	\end{align}
	
	In Eq.~\ref{eq:ris_fast}, the first term is usually the leading factor and, then, the complexity depends mostly on $\frac{\Delta m}{n}$. We now analyze the expected time complexity $C(R_j)$ of generating $R_j$ by the naive way.
	\begin{align}
		C(R_j) \leq \frac{\tau}{n}\sum_{u \in V} \sum_{v \in V} \Pr[u,v] d^{in}(v) = \frac{\tau}{n} \sum_{v\in V} d^{in}(v) \sum_{u \in V} \Pr[u,v] \nonumber
	\end{align}
	where $\Pr[u,v]$ is the probability of $v$ infected by $u$ within $\tau$ steps, $d^{in}(v)$ is the in-degree of $v$. Here we take the average over all possible sources $u$ of $R_j$ (each has probability $1/n$) and the maximum number of edge checks for node $v$ is $\tau d^{in}(v)$. Let denote the maximum size of a random \RR{} set as $\Delta$, we get $\sum_{u \in V} \Pr[u,v] \leq \Delta$ and thus,
	\begin{align}
		C(R_j) \leq \frac{\tau}{n} \sum_{v\in V} d^{in}(v) \Delta = \frac{\Delta\tau}{n} \sum_{v\in V} d^{in} (v) = \frac{\Delta m \tau}{n}
	\end{align}

	From Eq.~\ref{eq:ris_com1}, the complexity depends linearly on $\tau$ and is very high with large values of $\tau$.
\end{proof}

Thus, the running time $C'(R_j)$ of our fast implementation  is roughly \emph{$\tau$ times smaller than that $C(R_j)$} of the naive implementation, especially, for large values of $\tau$.

\subsubsection{Chance of Being Infection Sources}
We show how to utilize the generated \RR{} sets to estimate the chance that nodes being infection sources. First, we classify each generated \RR{} $R_j$ into one of the two groups, based on the source of $R_j$, denoted by $\src(R_j)$.
\begin{itemize}
  \item $\R_{Blue} = \{R_j | \src(R_j) \in V_I\}$: The set of blue \RR{} sets that sources are.
  \item $\R_{Red} =  \{R_j | \src(R_j) \notin V_I\}$: The set of red \RR{} sets that sources are  \emph{not} in $V_I$.
\end{itemize}

Since the infection sources infect the nodes in $V_I$ but not the nodes outside of $V_I$ (within a time $\tau$), thus, the infection sources should appear \emph{frequently in blue} \RR{} sets (of which sources are in $V_I$) and appear \emph{infrequently in red} \RR{} sets (of which sources are not in $V_I$.) Thus, a node $v$ that appear in many blue \RR{} and few red \RR{} sets will be more likely to be among the infection sources.

The above observation can be generalized for a given a subset of nodes $S \subset V_I$, e.g., a subset of suspected nodes. A subset $S$ that \textit{covers} (i.e. to intersect with) \emph{many blue} \RR{} sets and \emph{few red} \RR{} sets  will be more likely to be the infection sources.

Define the following two subgroups of \RR{} sets,
\begin{align}
\R_{Blue}^{-}(S) &=\{R_j |  R_j  \in \R_{Blue} \text{ and } R_j \cap S \mathbb{=} \emptyset\}, \text{ and }\\
\R_{Red}^{+}(S) &=\{R_j | R_j \in \R_{Red} \text{ and } R_j \cap S \mathbb{\neq} \emptyset\}.
\end{align}
They are the blue \RR{} sets that a suspected subset $S$ ``fails'' to cover (i.e. to intersect with) and the red \RR{} sets that $S$ (``mistakenly'') covers. The less frequent a random \RR{} set $R_j$ falls into one of those two subgroups, the more likely $S$ will be the infection sources.


Formally, we can prove that the probability of a random \RR{} set falls into one of those two subgroups equals exactly our objective function, denoted by $\mathbb{E}[D(\hat S, \tau,V_I)]$. We state the result in the following lemma.

\newcommand{\twopartdef}[3]
{
	\left\{
	\begin{array}{ll}
		#1 & \mbox{if } #2 \\
		#3 & \mbox{otherwise}
	\end{array}
	\right.
}
\newcommand{\threepartdef}[5]
{
	\left\{
	\begin{array}{lll}
		#1 & \mbox{if } #2 \\
		#3 & \mbox{if } #4 \\
		#5 & \mbox{otherwise }
	\end{array}
	\right.
}
\begin{Lemma}
	\label{lem:ris}
	Given a fixed set $S \in V_I$, for a random \RR{} set $R_{j}$, denote $X_j$ a random variable such that,
	\begin{align}
		\label{eq:eq14}
		X_j = \twopartdef {1} {R_j \in  \R^-_{Blue}(S) \text{ or } R_j \in \R^+_{Red}(S)} {0}
	\end{align}
	then,
	\begin{align}
		\label{eq:lem_ris}
		\mathbb{E}[X_j] = \frac{\mathbb{E}[D(S,\tau,V_I)]}{n}
	\end{align}
\end{Lemma}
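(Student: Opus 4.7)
The plan is to condition on the random source node $u = \src(R_j)$, which by construction of \TRIS{} is uniform over $V$, and then invoke the standard reverse--forward duality between a reverse cascade from $u$ and a forward cascade towards $u$. Writing
\begin{align*}
\mathbb{E}[X_j] = \Pr[R_j \in \R^-_{Blue}(S)] + \Pr[R_j \in \R^+_{Red}(S)],
\end{align*}
I would split each probability according to which node $u$ was chosen as source, giving
\begin{align*}
\mathbb{E}[X_j] = \tfrac{1}{n}\sum_{u\in V_I} \Pr[R_j \cap S = \emptyset \mid \src(R_j)=u] + \tfrac{1}{n}\sum_{u\notin V_I}\Pr[R_j \cap S \neq \emptyset \mid \src(R_j)=u].
\end{align*}

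Next I would establish the key identity: conditioned on $\src(R_j)=u$, the event $\{R_j \cap S \neq \emptyset\}$ has the same probability as the event that $u$ is infected by a forward \SI{} cascade from $S$ within $\tau$ steps. The argument is a coupling: both processes can be realized by assigning, independently to each directed edge $(w,v)$, a geometric random variable $t_{wv}$ representing the first successful infection attempt time. In the forward process, $u$ gets infected by $S$ within $\tau$ steps iff there is a path $s = v_0, v_1, \dots, v_\ell = u$ with $s \in S$ whose cumulative transmission delay is at most $\tau$. The reverse cascade from $u$ generated by \textsf{Fast-TRIS} explores exactly the same edge weights in the reversed graph and marks $w$ as reachable iff such a path from $w$ to $u$ of delay $\leq \tau$ exists; this is Dijkstra on a DAG of i.i.d.\ geometric weights. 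The truncation step (discarding nodes outside $V_I$) does not disturb the event $\{R_j \cap S \neq \emptyset\}$ because $S \subseteq V_I$, so any intersection point survives truncation. Hence $\Pr[R_j \cap S \neq \emptyset \mid \src(R_j)=u] = \Pr[u \text{ infected by } S]$, and the complementary statement holds for the empty-intersection event.

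Substituting this identity back gives
\begin{align*}
\mathbb{E}[X_j] = \tfrac{1}{n}\left(\sum_{u\in V_I}\Pr[u \text{ not infected by } S] + \sum_{u\notin V_I}\Pr[u \text{ infected by } S]\right),
\end{align*}
which, by the second-line expansion of $\mathbb{E}[D(S,\tau,V_I)]$ in Eq.~(\ref{eq:diff_2}), is precisely $\mathbb{E}[D(S,\tau,V_I)]/n$.

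The main obstacle is the coupling/duality step: one must carefully argue that the geometric-jump implementation in Alg.~\ref{alg:ris_fast}, combined with the min-priority-queue ordering and the $T(v)<\tau$ cutoff, faithfully simulates the reverse reachability structure of the forward \SI{} process up to time $\tau$. Once this equivalence is in place, the remainder is a routine accounting over the uniformly random source.
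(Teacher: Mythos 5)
Your proposal is correct and follows essentially the same route as the paper's proof: both decompose $\mathbb{E}[X_j]$ into the two mutually exclusive events, condition on the uniformly random source $u$ to trade the factor $1/n$, and invoke the reverse--forward duality realized by a shared assignment of geometric edge delays (the paper's graph realization $\mathcal{G}$ is exactly your coupling). Your explicit remark that truncation to $V_I$ cannot destroy an intersection with $S \subseteq V_I$ is a detail the paper leaves implicit, but it does not change the argument.
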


\begin{proof}
	 Since for a random RR set $R_j$, $R_j \in \R^{-}_{Blue}(S)$ and $R_j \in \R^{+}_{Red}(S)$ are two mutually exclusive events,
	 \begin{align}
		\label{eq:ori}
	 	\mathbb{E}[X_j] = & \Pr_{R_j}[R_j \in \R^{-}_{Blue}(S)] + \Pr_{R_j}[R_j \in \R^{+}_{Red}(S)]
	 \end{align}
	 We will prove an equivalent formula of Eq.~\ref{eq:lem_ris} that,
	 \begin{align}
	 	\E[D(S,\tau)] = n(\Pr_{R_j}[R_j \in \R^{-}_{Blue}(S)] + \Pr_{R_j}[R_j \in \R^{+}_{Red}(S)]) \nonumber
	 \end{align}

	 Let define $\mathcal{G}$ as a realization of the graph $G$, $\mathcal{G} \sim G$, where each edge $(u,v)$ is assigned a length value indicating the number of trials $u$ has to make until $v$ gets infected from $u$. In one realization $\mathcal{G}$, the cascade from $S$ at time $\tau$, $V(S,\tau)$, is uniquely defined (the reachable nodes from $S$ within $\tau$-length path) and so as $D(S,\tau)$. According to the definition of $\E[D(S,\tau)]$ in Eq.~\ref{eq:diff_2}, we have,
	 \begin{align}
		 \E[D(S,\tau)] = \sum_{u \in V_{I}} \Pr_{\mathcal{G}\sim G}[u \text{ not infected}] + \sum_{v \notin V_{I}}\Pr_{\mathcal{G}\sim G}[v \text{ infected}] \nonumber
	 \end{align}
	 
	 Let denote $R_{j}(u)$ be a random \RR{} set rooted at $u$, the first term in the right-hand side is equivalent to,
	 \begin{align}
	 	\sum_{u \in V_{I}} \Pr_{\mathcal{G}\sim G}[u \text{ not infected}] = \sum_{u \in V_{I}} \Pr_{R_{j}(u) \vdash \mathcal{G}}[S \cap R_{j}(u) = \emptyset] \nonumber
	 \end{align}
	 \noindent where $R_{j}(u) \vdash \mathcal{G}$ denotes the consistency of $R_{j}(u)$ to $\mathcal{G}$ since $\mathcal{G}$ is a realization of $G$ and thus $R_{j}(u)$ is  well-defined. Since,
	 \begin{align}
	 	\Pr_{R_{j}(u) \vdash \mathcal{G}}[S \cap R_{j}(u) = \emptyset] = \Pr_{R_{j}}[S \cap R_{j} = \emptyset \text{ }|\text{ } \src(R_{j}) = u] \nonumber
	 \end{align}
	 we obtain,
	 \begin{align}
	 	\sum_{u \in V_{I}}& \Pr_{\mathcal{G}\sim G}[u \text{ not infected}] = \sum_{u \in V_{I}} \Pr_{R_{j}}[S \cap R_{j} = \emptyset \text{ }|\text{ } \src(R_{j}) = u] \nonumber \\
	 	& = \sum_{u \in V_{I}} \frac{\Pr_{R_{j}}[S \cap R_{j} = \emptyset \text{ }\&\text{ } \src(R_{j}) = u]}{\Pr_{R_{j}}[\src(R_{j}) = u]} \nonumber \\
	 	& = \sum_{u \in V_{I}} \Pr_{R_{j}}[S \cap R_{j} = \emptyset \text{ }\&\text{ } \src(R_{j}) = u]\cdot n \nonumber \\
	 	& \text{ (since the source of each \RR{} set is randomly chosen)} \nonumber \\
	 	& = n\sum_{u \in V_{I}} \Pr_{R_{j}}[S \cap R_{j} = \emptyset \text{ }\&\text{ } \src(R_{j}) = u] \nonumber \\
	 	\label{eq:eq30}
	 	& = n\Pr_{R_{j}}[S \cap R_{j} = \emptyset \text{ }\&\text{ } \src(R_{j}) \in V_{I}] \\
	 	& = n \Pr_{R_{j}}[R_j \in \R^{-}_{Blue}(S)]
	 \end{align}
	 \noindent The Eq.~\ref{eq:eq30} follows from the fact that, for all $u \in V_{I}$, $(S \cap R_{j} = \emptyset \text{ }\&\text{ } \src(R_{j}) = u)$ are mutually exclusive. Thus,
	 \begin{align}
		 \label{eq:eq1}
		 \sum_{u \in V_{I}} \Pr_{\mathcal{G}\sim G}[u \text{ not infected}] = n \Pr_{R_{j}}[R_j \in \R^{-}_{Blue}(S)]
	 \end{align}
	 Similarly, we can also achieve,
	 \begin{align}
		 \label{eq:eq2}
		 \sum_{v \in \bar{V}_{I}} \Pr_{\mathcal{G}\sim G}[v \text{ infected}] = n\Pr_{R_j}[R_j \in \R^{+}_{Red}(S)]
	 \end{align}
	 From Eq.~\ref{eq:eq1}, Eq.~\ref{eq:eq2} and Eq.~\ref{eq:ori}, we obtain
	 \begin{align}
		 \E[D(S,\tau)] = n(\Pr_{R_j}[R_j \in \R^{-}_{Blue}(S)] + \Pr_{R_j}[R_j \in \R^{+}_{Red}(S)]) \nonumber
	 \end{align}
	 which completes the proof of Lem.~\ref{lem:ris}.
\end{proof}

Lem.~\ref{lem:ris} suggests a two-stages approach to identify the infection sources: 1) generating many \RR{} sets and 2) look for a subset $S\subset V_I$ that minimize the size of $| \R^-_{Blue}(S) \cup R_j \in \R^+_{Red}(S)|$. In next two subsections, we address two key issues of this approach  1) \emph{Optimization method} to identify $S$ with guarantees and 2) \emph{Sample complexity}, i.e., how many \RR{} sets is sufficient to generate a good solution. Too few \RR{} sets  lead to biased and poor solutions, while too many \RR{} set lead to high running time.

%
\vspace{-0.03in}
\subsection{Submodular-cost Covering}
\label{subsec:sub}
We will transform the \ISI{} problem to a submodular-cost covering problem over the generated \RR{} sets. This allows us to apply the $\Delta$-approximation algorithm in \cite{Koufogiannakis13}, where $\Delta$ is the maximum size of any \TRR{} set.

 By Lemma \ref{lem:ris}, the problem of minimizing $\E[D(S,\mathcal{M},V_I)]$ can be cast as a minimization problem of $\Pr[R_j \in \R^{-}_{Blue}(S) \cup \R^{+}_{Red}(S)]$. This, in turn, can be approximated with the following problem over the generated \RR{} sets.
\vspace{-0.05in}
\begin{align}
	\min_{S \subseteq V_I} |\R^-_{Blue}(S) \cup \R^{+}_{Red}(S)|,
\end{align}
\vspace{-0.12in}

\noindent and, since $R_j \in \R^-_{Blue}(S)$ and $R_j \in \R^{+}_{Red}(S)$ are disjoint, the above minimization problem is equivalent to,
\vspace{-0.18in}

\begin{align}
	\label{eq:eq4}
	\min_{S\subseteq V_I} |\R^-_{Blue}(S)| + |\R^{+}_{Red}(S)|
\end{align}
\vspace{-0.12in}

We shall convert the above problem to the \emph{submodular-cost covering} in \cite{Koufogiannakis13}, stated as follows.
\begin{Definition}[Submodular-cost covering]\cite{Koufogiannakis13}
An instance is a triple $(c, \mathcal C, U)$ where
\begin{itemize}
\item The cost function $c(x): \mathbf{R}^n_{\geq 0} \rightarrow \mathbf{R}_{\geq 0}$ is submodular, continuous, and non-decreasing.
\item The constraint set $\mathcal C\subseteq 2^{\mathbf{R}_{\geq 0}}$ is a collection of covering constraints, where each constraint $S \in \mathcal C$ is a subset of $\mathbf{R}^n_{\geq 0}$.
\item For each $j \in [n]$, the domain $U_j$ for variable $x_j$ is any subset of $\mathbf{R}_{\geq 0}$.
\end{itemize}
The problem is to find $x \in \mathbf{R}^n_{\geq 0}$, minimizing $c(x)$ subject to $x_j \in U_j, \forall j \in [n]$  and $x \in S, \forall S \in \mathcal C$.
\end{Definition}

\captionsetup{labelfont=bf}
\begin{figure}[!ht] \centering
	\includegraphics[width=0.7\linewidth]{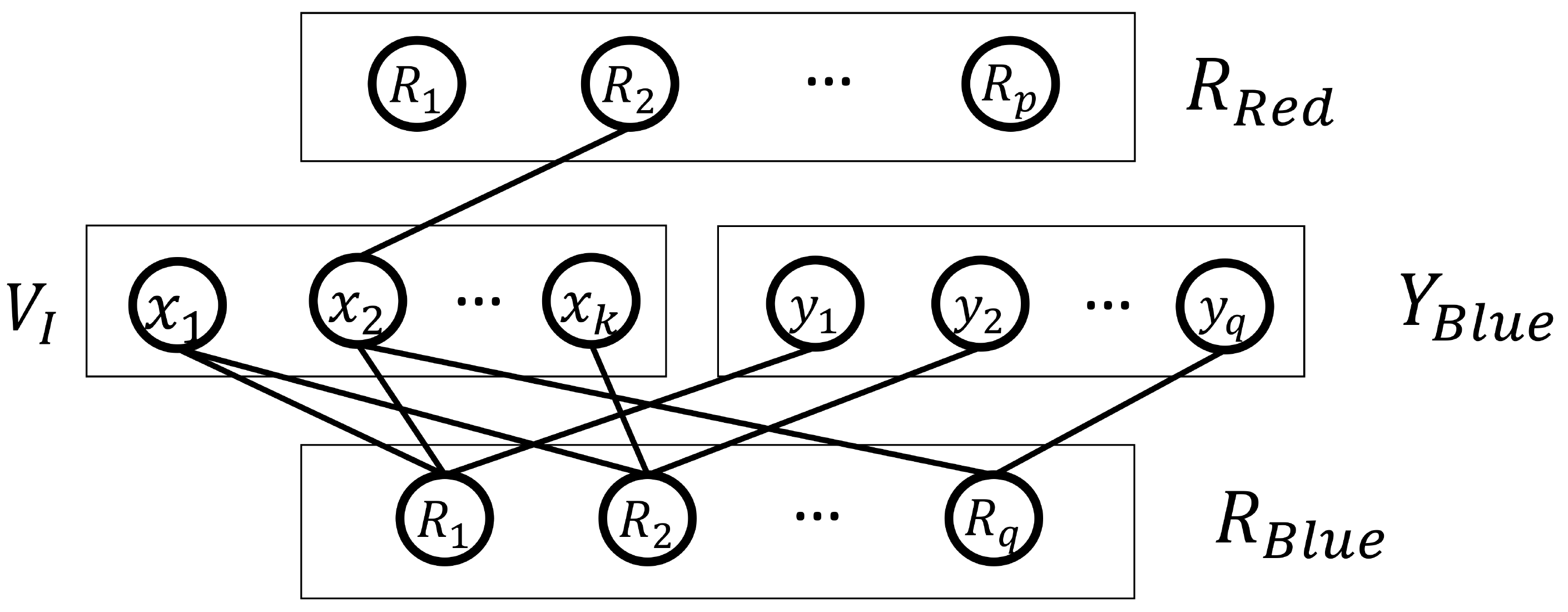}
	\caption{Conversion to Submodular-cost covering.}
	\label{fig:convert}
\end{figure}

\textbf{Conversion to submodular-cost covering problem.}
We convert the form in Eq.~\ref{eq:eq4} into a submodular-cost covering problem as demonstrated in Fig.~\ref{fig:convert}. Let $q = |\R_{Blue}|$ and $p = |\R_{Red}|$. 
We associate a variable $x_u \in [0, 1]$ for each $u \in V_{I}$ to indicate whether the corresponding node is selected as an  infected source. We also assign a variable $y_j$ to each \TRR{} set $R_j \in \R_{Blue}$. We require all blue \RR{} $R_j$ sets to be covered through the constraint $\max\{\max_{u \in R_j}x_u, y_j\} \ge 1$. Thus for each blue $R_j$ either $x_v=1$ for some $v \in R_j$  or the corresponding $y_j=1$. 

The objective is to minimize the cost function $\min_{x,y} c(x,y) = \sum_{R_j \in \R_{Red}} \max_{u\in R_j}(x_u) + \sum_{j=1}^{q} y_j$. The first part of the cost function $\max_{u\in R_j}(x_u)$
is a submodular function since the $\max$ function is submodular (see footnote 1, page 2 in \cite{Koufogiannakis13}). The second part $ \sum_{j=1}^{q} y_j$ is a linear function, and thus is also a submodular function. Therefore, \emph{the objective is a submodular function}.

Thus, the problem in Eq.~\ref{eq:eq4} can be converted to the following submodular-cost covering problem,
\begin{align}
	\label{eq:eq5}
	&\quad \quad \min_{x,y} c(x,y) = \sum_{R_j \in \R_{Red}} \max_{u\in R_j}(x_u) + \sum_{j=1}^{q} y_j\\
	&\text{subject to (for each } R_j \in \R_{Blue} \text{) } \max\{\max_{u \in R_j}x_u, y_j\} \ge 1 \nonumber
\end{align}

\noindent Since for any assignment of variable set $x$, we have a corresponding source selection: node $u$ is selected as infection source if $x_u = 1$. The first term $\sum_{R_j \in \R_{Red}} \max_{u\in R_j}(x_u)$ in Eq.~\ref{eq:eq5} is equivalent to $| \R^{+}_{Red}(S) |$ in Eq.~\ref{eq:eq4} and similarly $\sum_{j=1}^{q} y_j$ together with the constraints is equivalent to $|\R^{-}_{Blue}(S)|$. In Eq.~\ref{eq:eq5}, each covering constraint is associated with a blue \TRR{} set $R_j$ and says that if $R_j$ is not covered by any variable $x_u$ ($x_u = 1$), then $y_j = 1$ which will increase the cost function by 1. Thus, Eq.~\ref{eq:eq5} minimizes the number of red \TRR{} sets covered and blue \TRR{} sets uncovered.
\setlength{\textfloatsep}{3pt}
\begin{algorithm}
	\caption{\textsf{Submodular-cost-Covering}}
	\label{alg:sub}
	\KwIn{Infected set $V_{I}$, collection of \TRR{} sets $\R$}
	\KwOut{An $\Delta$-approximate set $\hat S$}
	Formulate the submodular cost covering version from $\R$\\
	$x_u = 0 , \forall u \in V_I$ and $y_j = 0, \forall j : R_j \in \R_{Blue} $\\
	\ForEach{$R_j \in \R_{Blue}$}{
		$\theta = \displaystyle\min_{u \in R_j}\sum_{R_{t}\in \R^+_{Red}(u)}(1 - \displaystyle\max_{v \in R_{t}}x_v)$\\
		$\theta = \min\{\theta, 1-y_j \}$\\
		\ForEach{$u \in R_j$}{
			\eIf{ $\R^+_{Red}(u) = \emptyset$} {
				$x_u =1$
			}{
			$x_u = \frac{1}{|\R^+_{Red}(u)|}\big(\theta + \displaystyle\sum_{R_{t}\in \R^+_{Red}(u)}\displaystyle\max_{v \in   R_{t} }x_v \big)$
		}
	}
	$y_j = y_j + \theta$\\
}
Add $u$ into $\hat S$ if $x_u = 1$\\
Return $\hat S$\\
\end{algorithm}
\textbf{$\Delta$-Approximation Algorithm.}
Our reformulation of \ISI{} to submodular-cost covering problem is similar to that of the facility location problem in Section 7 of \cite{Koufogiannakis13}. According to Lemma 5 in \cite{Koufogiannakis13}, the following greedy algorithm (Alg.~\ref{alg:sub}) runs in linear time with respect to the total size of all the \TRR{} sets and returns an $\Delta$-approximate solution.
\begin{theorem}
	\label{theo:sub}
	Alg.~\ref{alg:sub} returns an $\Delta$-approximate solution for the submodular-cost covering formulation of the \ISI{} problem, where $\Delta$ is the maximum size of an \TRR{} set (thus, $\Delta \leq V_I$), and runs in linear time.
\end{theorem}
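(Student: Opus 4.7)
The plan is to verify that the formulation in Eq.~\ref{eq:eq5} is a legitimate instance of the submodular-cost covering framework of \cite{Koufogiannakis13} with ``constraint width'' at most $\Delta$, and then to invoke their primal-dual greedy guarantee (Lemma~5) directly. The non-trivial content of the proof is not an intricate analysis but rather a careful check that (a) the objective satisfies the required structural properties, (b) Alg.~\ref{alg:sub} is in fact the specialization of their generic primal-dual scheme to this instance, and (c) the width parameter that governs their approximation ratio coincides with $\Delta$, the maximum size of an \TRR{} set.

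First I would verify submodularity and the covering shape of the constraints. The objective $c(x,y) = \sum_{R_j \in \R_{Red}} \max_{u \in R_j} x_u + \sum_{j=1}^{q} y_j$ is a sum of $\max$-of-variables terms plus a linear term; each $\max$ over a variable subset is non-decreasing, continuous and submodular (as noted in the referenced footnote of \cite{Koufogiannakis13}), and sums preserve these properties, so $c(\cdot)$ qualifies as a submodular cost. Each blue-\TRR{} constraint $\max\{\max_{u \in R_j} x_u,\, y_j\} \geq 1$ defines an upward-closed set in $\mathbf{R}^{|V_I|+q}_{\geq 0}$, which is exactly the form of a covering constraint in the framework. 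Restricting the domain $U_u = [0,1]$ for each $x_u$ (and similarly for $y_j$) is permitted, and rounding to $\{0,1\}$ at termination preserves feasibility because the cost function is non-decreasing.

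Second I would read off the approximation factor. The Koufogiannakis primal-dual algorithm, when applied to a covering constraint involving a variable set of size $w$, pays at most a factor $w$ over optimum when charging each constraint; the overall ratio is the maximum width over all constraints. In our formulation, the constraint for a blue \TRR{} set $R_j$ involves the variables $\{x_u : u \in R_j\} \cup \{y_j\}$; however, because $y_j$ is a dedicated slack used only by constraint $j$, it can be absorbed into the constraint without inflating the width charged by the primal-dual accounting, so the effective width is $|R_j| \leq \Delta$. Matching Alg.~\ref{alg:sub} against the generic scheme in \cite{Koufogiannakis13}: lines~4--5 compute the maximum feasible dual step $\theta$ along the current constraint, lines~6--10 raise the incident primal variables by the amount that equalizes their marginal cost contributions (which, for the $\max$-submodular cost, takes the closed-form expression in line~10), and line~11 credits the slack; this is exactly their update rule. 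Hence Lemma~5 of \cite{Koufogiannakis13} applies and yields the claimed $\Delta$-approximation.

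Finally, for the linear-time bound I would argue that each blue-\TRR{} set is visited exactly once by the outer loop, and inside the iteration each variable $x_u$ with $u \in R_j$ is accessed a constant number of times (once to read current $x_v$, once to recompute), so the work per constraint is $O(|R_j| + \sum_{u \in R_j} |\R^+_{Red}(u)|)$; summing and using an inverted index over $\R_{Red}$ built in one linear pass at initialization, the total running time is $O\!\left(\sum_{R_j} |R_j|\right)$, i.e., linear in the total size of all \TRR{} sets. The step I expect to be the main obstacle is the width accounting: showing that the auxiliary slack variables $y_j$ do not inflate the approximation ratio from $\Delta$ to $\Delta+1$. I would resolve this either by noting that $y_j$ participates in only one constraint (so its contribution to the dual charging is tight), or by a direct charging argument proving that $y_j$ is raised only on the residual slack left after the $x_u$ updates, so the total primal cost incurred per constraint is bounded by $\Delta$ times the dual increment, exactly as needed.
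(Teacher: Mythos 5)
Your proposal takes essentially the same route as the paper: the paper gives no real proof of this theorem, simply verifying that the cost in Eq.~\ref{eq:eq5} is submodular (max terms plus a linear term) and then citing Lemma~5 of \cite{Koufogiannakis13} by analogy with their facility-location reduction. Your treatment is in fact more careful than the paper's, in particular in flagging the width-accounting issue with the slack variables $y_j$ (whether the constraint width is $|R_j|$ or $|R_j|+1$), which the paper silently elides when asserting the factor is exactly $\Delta$ rather than $\Delta+1$.
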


The Alg.~\ref{alg:sub} starts with formulating the submodular-cost covering problem from $V_I$ and $\R$ by creating the necessary variables, cost function and constraints as specified previously. A variable $x_u$ is initialized to 0 and gets updated in the iterations that node $u$ is in the \TRR{} set considering in those iterations. The algorithm passes through all the \TRR{} sets $R_j \in \R_{Blue}$ and makes each of them satisfied in a single iteration in which it calculates the minimum increase $\theta$ of the cost function (Line 4-5) that satisfies the constraint. This minimum increase is computed by sequentially trying to raise each variable $x_u : u \in R_j$ or $y_j$ to 1 (covering) and calculating the corresponding cost. Afterwards, it updates each variable of $R_j$ by an amount that makes the cost function increased by $\theta$ (Line 6-11). At the end, it selects the nodes in $V_{I}$ that have value 1 in their variables (Line 12).

\begin{algorithm}
	\caption{\SISI{} Algorithm}
	\label{alg:sisi}
	\KwIn{Graph $G=(V, E)$, infection probability $\beta$, a set of infected nodes $V_I$, an infection model $\mathcal{M}$ and $\epsilon, \delta \in (0, 1)$.}
	\KwOut{Initial infected set $\hat S$.}
	$\Lambda = (1 + \epsilon)2c \Big [ \ln \frac{2}{\delta} + k\ln 2  + 1\Big ] \frac{1}{\epsilon^2}$\\
	$T = \Lambda, \R \leftarrow \emptyset$\\
	\Repeat{$|R_{Blue}^{-}(\hat S)| + |R_{Red}^{+}(\hat S)| \ge \Lambda$}{
		Generate $T$ additional \TRR{} sets  by \textsf{Fast-TRIS} (or the reverse sampling in \cite{Borgs14,Nguyen16} for \IC{}, \LT{} models)\\
		$\hat S = $ \textsf{Submodular-cost-Covering}($V_I, \R$)\\
		$T = |\R|$\\
		$\Delta = \max_{R_j} |R_j|$\\
		\If{$\epsilon > 1/(1+\Delta)$}{
			$\epsilon = 1/(1+\Delta)$\\
			$\Lambda = (1 + \epsilon)2c \Big [ \ln \frac{2}{\delta} + k\ln 2  + 1\Big ] \frac{1}{\epsilon^2}$\\
		}
	}
	Post-optimization($\hat S$)\\
	Return $\hat S$\\
\end{algorithm}

\vspace{-0.05in}
\subsection{SISI Approximation Algorithm}
\label{subsec:main_alg}

\label{subsec:alg}
We will describe the approximation algorithm, named \SISI{}, which combines the three key advanced components: \TRIS{} sampling (Subsec.~\ref{subsec:ris}), the $\Delta$-approximate submodular-cost covering algorithm (Subsec.~\ref{subsec:sub}) and a stopping condition in \cite{Nguyen16}, to solve the \ISI{} problem and returns an $\Delta\frac{2}{(1-\epsilon)^2}$-approximate solution with at least $(1-\delta)$-probability (proved in Sec.~\ref{sec:approx}). The description of \SISI{} is given in Alg.~\ref{alg:sisi}.

\SISI{} begins with initializing $\Lambda$ which will decide the stopping condition (Line~11). The whole algorithm iterates through multiple steps: in the first step, it generates $\Lambda$ \TRR{} sets and add them to $\R$ since, to satisfy the stopping condition (Line~11), we need at least $\Lambda$ \TRR{} sets; in subsequent iterations, the algorithm doubles the number of \TRR{} sets in $\R$ by generating $|\R|$ more. In each iteration, it utilizes the submodular-cost covering algorithm to find the candidate set $\hat S$ (Line~5) and check whether we have sufficient statistical evidence to achieve a good solution by checking the \textit{stopping condition} (Line~11). The stopping condition plays a decisive roles in both theoretical solution quality and the complexity of the algorithm. The condition in \SISI{} is derived from the results of optimal sampling for Monte-Carlo estimation studied in \cite{Dagum00}. In the next section, we will prove that with this stopping condition, \SISI{} returns an $\Delta\frac{2}{(1 - \epsilon)^2}$-approximate solution with probability of at least $(1-\delta)$, where $\epsilon, \delta$ are given as inputs. The check in Lines~8-10 is to guarantee $\epsilon$ small enough and described in Sec.~\ref{sec:approx}. At the end of the algorithm, \SISI{} performs a post-optimization of $\hat S$ which incrementally removes nodes in $\hat S$ if that improves the objective function. 

\section{Algorithm Analysis}
\label{sec:approx}
We will analyze the approximation guarantee and time complexity of \SISI{} algorithm. In short, we prove that \SISI{} returns an $\Delta\frac{2}{(1-\epsilon)^2}$-approximate solution. In the sequel, we will present the time complexity of \SISI{}.
\subsection{Approximation Guarantee}
\label{subsec:approx}
To prove the approximation guarantee of \SISI{}, we show two intermediate results: 1) with $\frac{n\Lambda}{\E[D(\hat S)]}$ \TRR{} sets where $\hat S$ is the solution returned by \SISI{}, $\E[D(\hat S)] = \E[D(\hat S, \mathcal{M},V_I)]$ for short since the $\mathcal{M}, V_I$ are fixed, all the sets $S \subset V_{I}$ are well approximated from $\R$ with high probability (Lem.~\ref{lem:app}) and 2) the actual number of \TRR{} sets generated in \SISI{} is greater than $\frac{n\Lambda}{\E[D(\hat S)]}$ with high probability (Lem.~\ref{lem:rr}). Then, combine these results and the property of submodular-cost covering, we obtain the approximation factor in Theo.~\ref{theo:approx}.
%
%

Denote $D_{\R}(S) = \frac{n}{|\R|}(|\R_{Blue}^{-}(S)| + |\R_{Red}^{+}(S)|)$, which is an approximation of $\E[D(S)]$, achieved from the collection of \TRR{} sets $\R$. The following lemma states the approximation quality of a set $S \subseteq V_{I}$. We assume that $\E[D(\hat S)] \neq 0, \forall \hat S \subset V_I$ since the case of equaling 0 only happens if $V_I$ is a disconnected clique with edge weights being all 1 and then, every set $S \in V_I$ are exactly identical. In that case,  the sources can be any set of nodes and are intractable to identify.
\begin{Lemma}
	\label{lem:app}
	If we have $T^*=\frac{n\Lambda}{\E[D(\hat S)]}$ \TRR{} sets where $\hat S$ is the solution returned by \SISI{}, then for a set $S \subseteq V_{I}$,
	\begin{align}
		\Pr[|D_{\R}(S) - \E[D(S)]| \geq \epsilon \sqrt{\E[D(S)] \cdot \E[D(\hat S)]}] \leq \frac{\delta}{M} \nonumber
	\end{align}
	where $M = 2^k + 1$ and $k = |V_I|$.
\end{Lemma}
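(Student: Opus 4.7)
The plan is to view $D_{\R}(S)$ as a scaled empirical mean of i.i.d.\ Bernoulli trials and to apply a multiplicative Chernoff-type bound with a relative-error parameter chosen so that the resulting absolute deviation matches the square-root form on the right-hand side of the claim. By Lemma~\ref{lem:ris}, the indicator $X_j(S) := \mathbb{I}[R_j \in \R^-_{Blue}(S) \cup \R^+_{Red}(S)]$ attached to each \TRR{} set $R_j$ is Bernoulli with mean $\mu_S := \E[D(S)]/n$, and the $X_j(S)$'s are i.i.d.\ across the independently generated \TRR{} sets. Writing $\hat\mu := (1/T^*)\sum_{j=1}^{T^*} X_j(S)$, we have $D_{\R}(S) = n\hat\mu$, so the target event can be rewritten as $|\hat\mu - \mu_S| \geq \epsilon \sqrt{\mu_S \mu_{\hat S}}$, where $\mu_{\hat S} := \E[D(\hat S)]/n$.

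Next, I would invoke the multiplicative Chernoff bound in its Dagum--Karp--Luby--Ross form, namely $\Pr[|\hat\mu - \mu_S| \geq \gamma\mu_S] \leq 2\exp\bigl(-T^*\mu_S\gamma^2/(2c(1+\epsilon))\bigr)$ with $c = 2(e-2)$, using the relative error $\gamma := \epsilon\sqrt{\mu_{\hat S}/\mu_S}$ so that $\gamma\mu_S = \epsilon\sqrt{\mu_S \mu_{\hat S}}$ matches the desired absolute deviation. With this substitution the exponent collapses neatly via $T^*\mu_S\gamma^2 = \epsilon^2 T^*\mu_{\hat S} = \epsilon^2 \Lambda$, using the hypothesis $T^* = n\Lambda/\E[D(\hat S)] = \Lambda/\mu_{\hat S}$. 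Plugging in $\Lambda = (1+\epsilon)\,2c\,[\ln(2/\delta)+k\ln 2+1]/\epsilon^2$ cancels the $(1+\epsilon)\,2c$ factor in the denominator, and the bound becomes $2\exp(-\ln(2/\delta)-k\ln 2-1) = \delta/(e\cdot 2^k) \leq \delta/(2^k+1) = \delta/M$, where the final inequality uses $e \cdot 2^k \geq 2^k + 1$ for all $k \geq 0$.

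The main obstacle I anticipate is justifying the multiplicative Chernoff bound when the ratio $\mu_{\hat S}/\mu_S$ is large, since then $\gamma = \epsilon\sqrt{\mu_{\hat S}/\mu_S}$ may exceed $1$ and the standard multiplicative form breaks down. This is precisely why the algorithm enforces the cap $\epsilon \leq 1/(1+\Delta)$ in Lines~8--10 of Alg.~\ref{alg:sisi}: together with the fact that any \TRR{} set contributing to $\R_{Blue}^{-}$ or $\R_{Red}^{+}$ has size at most $\Delta$, which constrains how different $\mu_S$ and $\mu_{\hat S}$ can be in the relevant regime, this cap keeps $\gamma$ in the valid range of the inequality. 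A routine but necessary check is that the $X_j(S)$'s are genuinely independent, which follows from the \TRR{} sets being drawn independently by \textsf{Fast-TRIS}; beyond that, the argument is a standard concentration computation matched to the specific form of $\Lambda$.
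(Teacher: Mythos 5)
Your proposal is correct and follows essentially the same route as the paper: the same indicator variables from Lemma~\ref{lem:ris}, the same Dagum--Karp--Luby--Ross multiplicative bound applied with relative error $\gamma = \epsilon\sqrt{\mu_{\hat S}/\mu_S}$ so that the absolute deviation becomes $\epsilon\sqrt{\mu_S\mu_{\hat S}}$, and the same cancellation against $\Lambda$ in the exponent to land at $\delta/M$ (the paper uses $k\ln 2 + 1 > \ln(2^k+1)$ where you use $e\cdot 2^k \geq 2^k+1$, which is the same bookkeeping). The only divergence is that you assert the samples are i.i.d., whereas the paper explicitly notes that the adaptive stopping rule --- and the fact that $T^*$ is defined through the data-dependent $\hat S$ --- makes the sequence $\{X_j^1, X_j^2, \dots\}$ a martingale rather than an i.i.d.\ sequence, and accordingly invokes the martingale form of the bound from \cite{Dagum00}.
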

\begin{proof}
	First, for a subset $S \subseteq V_I$ and a random \TRR{} set $R_j$, recall the binary random variable $X_j$ in Eq.~\ref{eq:eq14} that,
	\begin{align}
		X_j = \begin{cases} 1 &\mbox{if } R_j \in \R_{Blue}^{-}(S) \cup \R_{Red}^{+}(S) \\
		0 & \mbox{otherwise}. \end{cases}
	\end{align}
	Thus, the series of \TRR{} sets in $\R$ corresponds to a sequence of samples of $X_j$, denoted by $\{X^1_j,X^2_j,\dots\}$. Intuitively, since the \TRR{} are generated independently, the resulted sample sequence of $X_j$ should also be independent and identically distributed in $[0,1]$. However, similar to the Stopping Rule Algorithm in \cite{Dagum00} that \SISI{} creates a dependency on the samples by stopping the algorithm when some condition is satisfied. \SISI{} jumps to the next round when $|R_{Blue}^{-}(\hat S)| + |R_{Red}^{+}(\hat S)| \ge \Lambda$ or $\sum_{i = 1}^{|\R|} X_j^i \ge \Lambda$ is not met and hence, whether we generate more samples depending on the current set of \TRR{} sets. Interestingly, similar to the case of Stopping Rule Algorithm in \cite{Dagum00}, the sequence $\{X^1_j,X^2_j,\dots\}$ forms a \textit{martingle} and the following results follow from \cite{Dagum00}:
	
	Let $X_j^1, X^2_j,...$ samples according to $X_j$ random variable in the interval $[0, 1]$ with mean $\mu_{X_j}$ and variance $\sigma_{X_j}^2$ form a martingale and $\hat \mu_{X_j} = \frac{1}{T}\sum_{i=1}^{T}X^i_j$ be an estimate of $\mu_{X_j}$, for any fixed $T > 0, 0 \geq \epsilon \geq 1$,
	\begin{align}
		\label{eq:chef_1}
		\Pr[\hat \mu_{X_j} \geq  (1+ \epsilon) \mu_{X_j}] \leq e^{\frac{-T\mu_{X_j}\epsilon^2}{2c}}
	\end{align}
	and,
	\begin{align}
		\label{eq:chef_2}
		\Pr[\hat \mu_{X_j} \leq  (1- \epsilon) \mu_{X_j}] \leq e^{\frac{-T\mu_{X_j}\epsilon^2}{2c}}.
	\end{align}
	
	Recall that the value of $D_{\R}(S)$ is equivalent to,
	\begin{align}
		D_{\R}(S) = \frac{n}{|\R|}\sum_{i = 1}^{|\R|} X^i_j
	\end{align}

	Denote $\hat \mu_S = \frac{1}{|\R|}\sum_{i = 1}^{|\R|} X^i_j$ which is an estimate of $\mu_S = \frac{1}{n}\E[D(S)]$, then $T^* = \frac{\Lambda}{\mu_{\hat S}}$ and the inequality in Lem.~\ref{lem:app} can be rewritten,
	\begin{align}
		\label{eq:eq20}
		\Pr[|\hat \mu_S - \mu_S| \geq \epsilon \sqrt{\mu_{\hat S}\mu_S}] \leq \frac{\delta}{M}
	\end{align}
	
	Now, apply the inequality in Eq.~\ref{eq:chef_2} on the left side of the above Eq.~\ref{eq:eq20}, we have,
	\begin{align}
		\Pr[\hat \mu_S \leq  (1 - \epsilon\sqrt{\frac{\mu_{\hat S}}{\mu_S}}) \mu_S] \leq e^{\frac{-T^*\mu_S\epsilon^2\mu_{\hat S}}{2c\mu_S}} = e^{-(\ln(2/\delta) + k\ln 2 + 1)} \nonumber
	\end{align}

	Since $k\ln 2 + 1 > \ln (2^k + 1)$, we obtain,
	\begin{align}
		\label{eq:eq6}
		\Pr[\hat \mu_S \leq \mu_S - \epsilon \sqrt{\mu_{\hat S}\mu_S}] \leq \frac{\delta}{2(2^k + 1)} = \frac{\delta}{2M}
	\end{align}
	Similarly, by applying the inequality in Eq.~\ref{eq:chef_1}, we obtain the following,
	\begin{align}
		\label{eq:eq7}
		\Pr[\hat \mu_{S} \geq \mu_{S} + \epsilon \sqrt{\mu_{\hat S}\mu_{S}}] \leq \frac{\delta}{2(2^k + 1)} = \frac{\delta}{2M}
	\end{align}
	
	Combining Eq.~\ref{eq:eq6} and Eq.~\ref{eq:eq7} proves Lem.~\ref{lem:app}.
\end{proof}

Lem.~\ref{lem:app} states that if we have at least $T^*=\frac{n\Lambda}{\E[D(\hat S)]}$ \TRR{} sets then a set $S \subset V_{I}$ is approximated within an additive error of $\epsilon\sqrt{\mu_{\hat S}\mu_S}$ with probability $(1-\frac{\delta}{M})$. As a consequence, the next lemma shows that \SISI{} generates at least $T^*$ \TRR{} set, thus the approximation of $S \subseteq V_I$ in \SISI{} is also good.
\begin{Lemma}[Stopping condition]
	\label{lem:rr}
	The number of \TRR{} sets generated by \SISI{} when it stops satisfies,
	\begin{align}
		\label{eq:lem_rr}
		\Pr[|\R| \leq T^*] \leq \frac{\delta}{M}
	\end{align}
\end{Lemma}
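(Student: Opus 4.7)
The statement bounds the probability that \SISI{} terminates with fewer than $T^* = n\Lambda/\E[D(\hat S)]$ \TRR{} sets. I would combine a martingale-style upper-tail inequality (Eq.~\ref{eq:chef_1}) with a union bound over the possible outputs $\hat S \subseteq V_I$, mirroring the stopping-time analysis already invoked for Lemma~\ref{lem:app}.

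\emph{From stopping-time event to fixed-time tail.} For each fixed candidate $S \subseteq V_I$, the partial sum $\sum_{i=1}^{T} X^i_S$ is nondecreasing in $T$. If the algorithm terminates at $|\R| \leq T^*$ with final output $\hat S = S$, the stopping condition in Line~3 of Alg.~\ref{alg:sisi} forces $\sum_{i=1}^{|\R|} X^i_S \geq \Lambda$, and monotonicity immediately lifts this to $\sum_{i=1}^{T^*_S} X^i_S \geq \Lambda$, where $T^*_S = \Lambda/\mu_S$ and $\mu_S = \E[D(S)]/n$. This converts the random stopping-time event, whose timing is intricately coupled to $\hat S$, into a tail event evaluated at a fixed (though $S$-dependent) horizon.

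\emph{Union bound and Chernoff.} Summing over the $|2^{V_I}| = M-1$ possible realizations of $\hat S$,
\[
\Pr[|\R| \leq T^*] \;\leq\; \sum_{S \subseteq V_I} \Pr\Big[\sum_{i=1}^{T^*_S} X^i_S \geq \Lambda\Big].
\]
For each fixed $S$, the sequence $\{X^i_S\}$ is bounded in $[0,1]$ and, exactly as in the proof of Lemma~\ref{lem:app}, forms a martingale once one accounts for the sample-dependent stopping rule of \SISI{}. The martingale upper-tail bound Eq.~\ref{eq:chef_1} then applies. Choosing the Chernoff deviation parameter so that the threshold hits $\Lambda$, and substituting $\Lambda = (1+\epsilon)2c[\ln(2/\delta) + k\ln 2 + 1]/\epsilon^2$, makes the exponent dominate $\ln(2/\delta) + k\ln 2 + 1$. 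Each summand is therefore at most $\delta/(2M)$, and summing over the $\leq M$ candidate sets yields the claimed $\delta/M$.

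\emph{Main obstacle.} The delicate technical step is decoupling the random stopping time $|\R|$ from the random output $\hat S$, since both are functions of the same samples through the submodular-cost-covering routine of Alg.~\ref{alg:sub}. The monotonicity-plus-union-bound device above separates them cleanly at the cost of a $2^k$ blowup that is precisely absorbed by the additive $k\ln 2$ term baked into $\Lambda$. A secondary subtlety is verifying that the $(1+\epsilon)$ prefactor inside $\Lambda$ supplies enough multiplicative slack for the Chernoff exponent to exceed $\ln(2/\delta) + k\ln 2 + 1$; once this is checked, the algebra matches the stated threshold $\delta/M$.
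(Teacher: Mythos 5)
Your core device is the same as the paper's: use monotonicity of the partial sums $\sum_{i\le T} X^i$ to convert the stopping event $\{|\R|\le T^*\}$ into a fixed-horizon tail event $\{\sum_{i=1}^{T^*}X^i \ge \Lambda\}$, then invoke the martingale Chernoff bound of Eq.~\ref{eq:chef_1}. Where you diverge is the union bound over all $2^k$ candidate sets, and this is exactly where the proposal stops delivering the stated bound. The paper applies the tail bound \emph{once}, to the actual returned set $\hat S$ (it specializes Lem.~\ref{lem:app} to $S=\hat S$, for which the additive error $\epsilon\sqrt{\mu_{\hat S}\mu_S}$ collapses to the multiplicative error $\epsilon\mu_{\hat S}$), and so charges only $\delta/M$ to this lemma; the union bound over the $2^k$ subsets is performed later, in Theo.~\ref{theo:approx}, where the total budget $2^k\cdot\delta/M+\delta/M=\delta$ is spent exactly once. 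Your version pays the $2^k$ factor a second time inside this lemma, and your own arithmetic confirms the cost: $M$ summands each at most $\delta/(2M)$ sum to $\delta/2$, not $\delta/M$. As written, the claimed inequality $\Pr[|\R|\le T^*]\le\delta/M$ is therefore not established; you would have to either weaken the conclusion to $\delta/2$ (which survives only after rescaling $\delta$ downstream) or drop the union bound and argue directly about $\hat S$ as the paper does.

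A second, subtler problem: with your horizon $T^*_S=\Lambda/\mu_S$, the expectation of $\sum_{i=1}^{T^*_S}X^i_S$ is exactly $T^*_S\mu_S=\Lambda$, so the event $\{\sum_{i=1}^{T^*_S}X^i_S\ge\Lambda\}$ asks only that the empirical sum exceed its mean, and Eq.~\ref{eq:chef_1} gives no decay at all. The deviation gap has to be manufactured from the $(1+\epsilon)$ prefactor of $\Lambda$: the paper's proof quietly works with $T^*=n\Upsilon/\E[D(\hat S)]$ where $\Upsilon=\Lambda/(1+\epsilon)$, so that the threshold $\Lambda=(1+\epsilon)T^*\mu_{\hat S}$ sits a multiplicative $(1+\epsilon)$ above the mean and Eq.~\ref{eq:chef_1} yields $e^{-T^*\mu_{\hat S}\epsilon^2/(2c)}\le\delta/(2M)$. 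You gesture at this slack in your closing paragraph, but your definition of $T^*_S$ does not actually create it, so the Chernoff step as set up would fail. (The tension between $\Lambda$ and $\Upsilon$ is arguably an inconsistency in the lemma statement itself, but a correct proof must resolve it explicitly rather than absorb it into ``choosing the deviation parameter.'')
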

\begin{proof}
	We also define the random variable $X_j$, samples $\{ X^1_j,X^2_j,\dots \}$ for the set $\hat S$ returned by \SISI{} similar to the proof of Lem.~\ref{lem:app}. Starting from the left-hand side of Eq.~\ref{eq:lem_rr}, we manipulate as follows,
	\begin{align}
		\label{eq:eq25}
		\Pr[|\R| \leq T^*] = \Pr[\sum_{i=1}^{|\R|}X^i_j \leq \sum_{i=1}^{T^*}X^i_j]
	\end{align}
	Since $|R_{Blue}^{-}(\hat S)| + |R_{Red}^{+}(\hat S)| =\sum_{i=1}^{|\R|}X^i_j$ and \SISI{} stops when  $|R_{Blue}^{-}(\hat S)| + |R_{Red}^{+}(\hat S)| \ge \Lambda$, Eq.~\ref{eq:eq25} is equivalent to,
	\begin{align}
		\Pr[|\R| \leq T^*] &\leq \Pr[\Lambda \leq \sum_{i=1}^{T^*}X^i_j] = \Pr[\frac{n}{T^*}\Lambda \leq \frac{n}{T^*}\sum_{i=1}^{T^*}X^i_j] \nonumber \\
		& = \Pr[\Delta\frac{n}{T^*}\Upsilon (1 + \epsilon) \leq \frac{n}{T^*}\sum_{i=1}^{T^*}X^i_j]
	\end{align}
	Recall that $T^*=\frac{n\Upsilon}{\E[D(\hat S)]}$ or $\E[D(\hat S)]=\frac{n\Upsilon}{T^*}$ and, thus,
	\begin{align}
		\label{eq:eq21}
		\Pr[|\R| \leq T^*] &\leq \Pr[\E[D(\hat S)] (1 + \epsilon) \leq \frac{n}{T^*}\sum_{i=1}^{T^*}X^i_j] \nonumber \\
		& = \Pr[\E[D(\hat S)] (1 + \epsilon) \leq D_{T^*}(\hat S)]
	\end{align}
	From Lem.~\ref{lem:app}, if we have $T^*$ RR sets, we obtain,
	\begin{align}
		\Pr[D_{\R}(S) \geq \E[D(S)] + \epsilon \sqrt{\E[D(S)] \cdot \E[D(\hat S)]}] \leq \frac{\delta}{M} \nonumber
	\end{align}
	\noindent for set $S$. Replacing $S$ by $\hat S$ gives,
	\begin{align}
		\Pr[D_{T^*}(\hat S) \geq \E[D(\hat S)] + \epsilon \E[D(\hat S)]] \leq \frac{\delta}{M} \nonumber
	\end{align}
	The left side is exactly the Eq.~\ref{eq:eq21} and thus,
	\begin{align}
		\Pr[|\R| \leq T^*] \leq \frac{\delta}{M}
	\end{align}
	That completes the proof of Lem.~\ref{lem:rr}.
\end{proof}

Based on Lem.~\ref{lem:app} and Lem.~\ref{lem:rr}, we are sufficient to prove the $\Delta\frac{2}{(1 - \epsilon)^2}$-approximation factor of \SISI{}.
\begin{theorem}
	\label{theo:approx}
	Let $OPT = \E[D(S^*)]$ be the optimal value of $\E[D(S)]$ at $S^*$. \SISI{} returns an $\Delta\frac{2}{(1 - \epsilon)^2}$-approximate solution $\hat S$ with probability of at least $(1-\delta)$ or,	
	\begin{align}
		\Pr[\E[D(\hat S)] \leq \Delta\frac{2}{(1 - \epsilon)^2}OPT] \geq 1 - \delta
	\end{align}
\end{theorem}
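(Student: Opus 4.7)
The plan is to combine the concentration inequalities from Lem.~\ref{lem:app} and Lem.~\ref{lem:rr} with the $\Delta$-approximation of Alg.~\ref{alg:sub} (Theo.~\ref{theo:sub}) on the sampled problem. First I would fix the values of $\epsilon$ and $\Lambda$ at termination and define a single ``good event'' by a union bound: namely, $|\R| \ge T^* = n\Lambda/\E[D(\hat S)]$ and, for every subset $S \subseteq V_I$, $|D_\R(S) - \E[D(S)]| < \epsilon \sqrt{\E[D(S)]\,\E[D(\hat S)]}$. Since there are $2^k$ subsets of $V_I$ and $M = 2^k+1$, combining Lem.~\ref{lem:app} over all $S$ with Lem.~\ref{lem:rr} shows this event fails with probability at most $(2^k+1)\delta/M = \delta$.

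Conditional on the good event, I specialize the two-sided approximation to $\hat S$ itself and to the unknown optimum $S^*$. On $\hat S$ this yields the lower bound $D_\R(\hat S) \ge (1-\epsilon)\E[D(\hat S)]$, while on $S^*$ it gives the upper bound $D_\R(S^*) \le OPT + \epsilon\sqrt{OPT \cdot \E[D(\hat S)]}$. Because \SISI{} runs Alg.~\ref{alg:sub} on the realized collection $\R$, Theo.~\ref{theo:sub} supplies $D_\R(\hat S) \le \Delta\, D_\R(S^*)$, and the final post-optimization step only reduces $D_\R$. Chaining these three bounds gives the central implicit relation
$$(1-\epsilon)\,\E[D(\hat S)] \;\le\; \Delta\bigl(OPT + \epsilon\sqrt{OPT \cdot \E[D(\hat S)]}\bigr).$$

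The final and most delicate step is to convert this implicit inequality into the claimed $\frac{2\Delta}{(1-\epsilon)^2}$-approximation. Setting $u = \sqrt{\E[D(\hat S)]/OPT}$, the relation becomes a scalar quadratic $(1-\epsilon)u^2 - \Delta\epsilon\, u - \Delta \le 0$. Solving for the positive root and then using the elementary bound $(a+b)^2 \le 2a^2 + 2b^2$ yields
$$\frac{\E[D(\hat S)]}{OPT} \;\le\; \frac{\Delta^2\epsilon^2 + 2(1-\epsilon)\Delta}{(1-\epsilon)^2},$$
which collapses to $\frac{2\Delta}{(1-\epsilon)^2}$ exactly when $\Delta\epsilon \le 2$. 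The main obstacle is aligning this algebraic condition with the algorithm: the safeguard in Lines~8--10 of Alg.~\ref{alg:sisi} rewrites $\epsilon$ to at most $1/(1+\Delta)$ as soon as $\Delta$ is observed, so $\Delta\epsilon < 1 < 2$ holds at termination and Theo.~\ref{theo:approx} follows with probability at least $1-\delta$.
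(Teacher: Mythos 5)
Your proposal is correct and follows essentially the same route as the paper's proof: the same union bound over the $2^k$ subsets combined with the stopping-condition lemma, the same specialization to $\hat S$ and $S^*$ chained through the $\Delta$-approximation $D_\R(\hat S)\le \Delta D_\R(S^*)$, and the same quadratic inequality in $\sqrt{\E[D(\hat S)]/OPT}$. If anything, you supply more detail than the paper at the final step — the explicit positive root, the $(a+b)^2\le 2a^2+2b^2$ bound, and the clean condition $\Delta\epsilon\le 2$ guaranteed by the $\epsilon\le 1/(1+\Delta)$ safeguard — where the paper simply asserts the result of "solving the quadratic inequality."
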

\begin{proof}
	From Lem.~\ref{lem:app}, we obtain,
	\begin{align}
		\Pr[|D_{\R}(S) - \E[D(S)]| \geq \epsilon \sqrt{\E[D(S)] \cdot \E[D(\hat S)]}] \leq \frac{\delta}{M} \nonumber
	\end{align}
	for a particular subset $S \subseteq V_{I}$ if there are at least $T^*$ \TRR{} sets. Furthermore, Lem.~\ref{lem:rr} states that \SISI{} generates at least $T^*$ \TRR{} sets with probability at least $\frac{\delta}{M}$.
	Taking union bound over all subsets $S \subseteq V_{I}$ (note that there are $2^k$ such subsets) to the above probability and the probability of \SISI{} generating at least $T^*$ \TRR{} sets in Lem.~\ref{lem:rr}, we achieve,
	\begin{align}
		\Pr[|D_{\R}(S) - \E[D(S)]| \geq \epsilon \sqrt{\E[D(S)] \cdot \E[D(\hat S)]}] \leq \delta\nonumber
	\end{align}
	for every set $S$. Thus, both
	\begin{align}
		\label{eq:eq22}
		D_{\R}(\hat S) \geq \E[D(\hat S)] - \epsilon \E[D(\hat S)]
	\end{align}
	and
	\begin{align}
		\label{eq:eq23}
		D_{\R}(S^*) \leq OPT + \epsilon \sqrt{OPT \cdot \E[D(\hat S)]}
	\end{align}
	happen with probability at least $(1-\delta)$. Plugging $D_{\R}(\hat S) \leq \Delta D_{\R}(S^*)$ achieved by submodular-cost covering to Eq.~\ref{eq:eq23},
	\begin{align}
		D_{\R}(\hat S) \leq \Delta (OPT + \epsilon \sqrt{OPT \cdot \E[D(\hat S)]})
	\end{align}
	then combining with Eq.~\ref{eq:eq22} gives,
	\begin{align}
		\E[D(\hat S)] - \epsilon \E[D(\hat S)] \leq \Delta (OPT + \epsilon \sqrt{OPT \cdot \E[D(\hat S)]}) \nonumber
	\end{align}
	or
	\begin{align}
		\frac{\E[D(\hat S)]}{OPT} \leq \frac{\Delta}{1 - \epsilon - \epsilon \Delta \sqrt{\frac{OPT}{\E[D(\hat S)]}}}
	\end{align}
	This inequality is valid only when $1-\epsilon-\epsilon \Delta \sqrt{\frac{OPT}{\E[D(\hat S)]}} > 0$ which means $\epsilon < 1/(1+\Delta\sqrt{\frac{OPT}{\E[D(\hat S)]}})$. Since $\epsilon$ is a free parameter, we can choose $\epsilon \leq 1/(1+\Delta)$ and satisfy the condition. By considering $\sqrt{\frac{\E[D(\hat S)]}{OPT}}$ as a variable and solve the quadratic inequality with $\epsilon \leq 1/(1+\Delta)$, we obtain,
	\vspace{-0.2in}
	
	\begin{align}
		\frac{\E[D(\hat S)]}{OPT} \leq \Delta\frac{2}{(1 - \epsilon)^2}
	\end{align}
	which states the $\Delta\frac{2}{(1 - \epsilon)^2}$ approximation factor of \SISI{} and happens with probability at least $(1-\delta)$.
\end{proof}
\begin{figure*}[!ht]
	\subfloat[$|V_I| = 100$]{
		\includegraphics[width=0.32\linewidth]{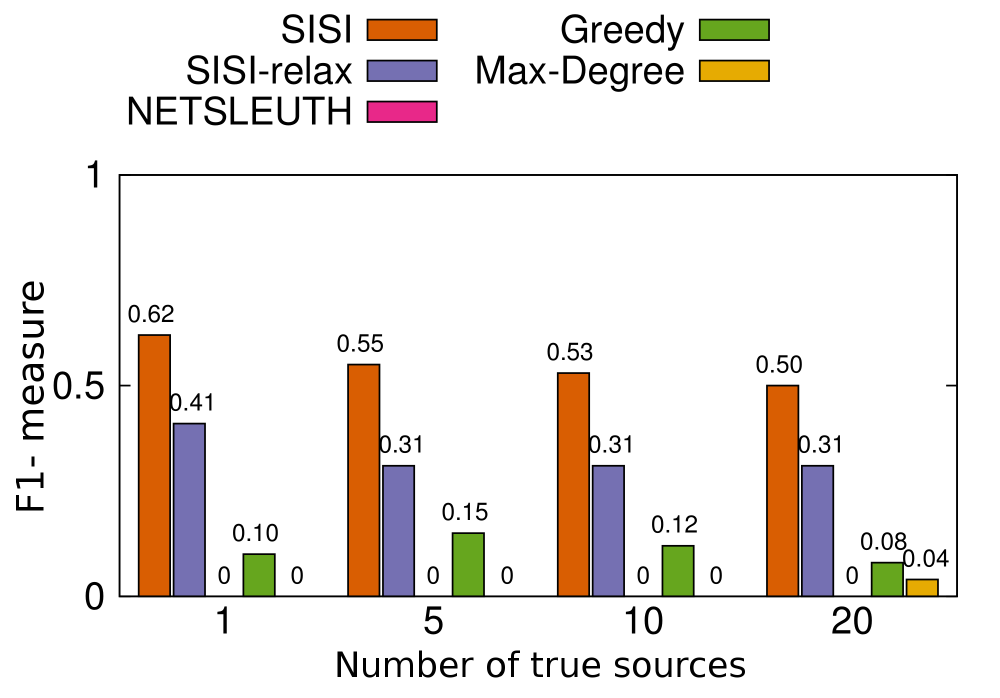}
	}
	\subfloat[$|V_I| = 500$]{
		\includegraphics[width=0.32\linewidth]{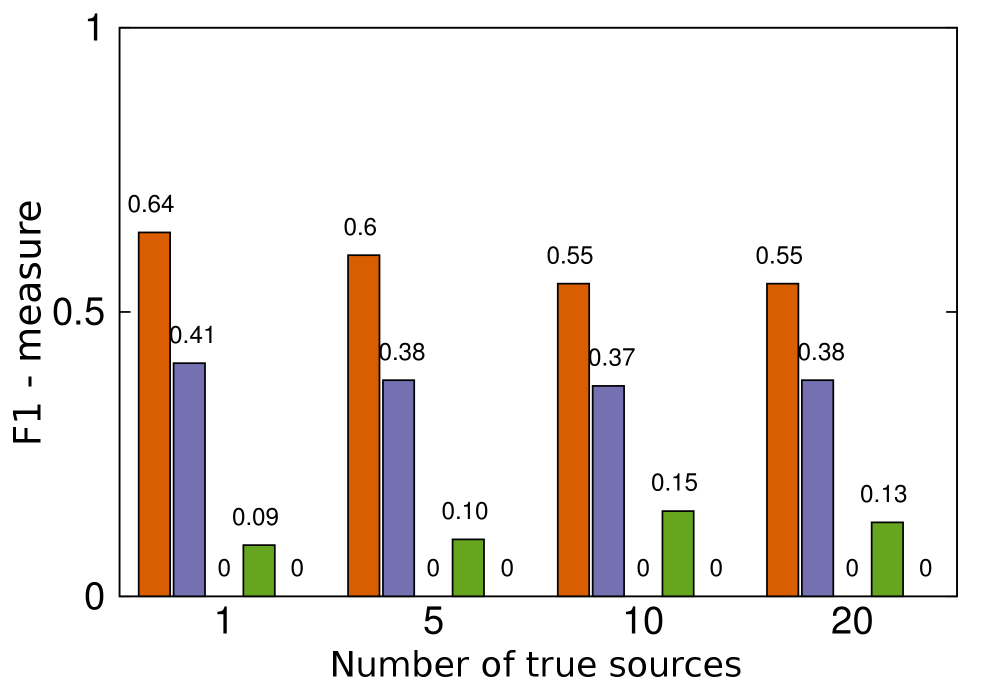}
	}
	\subfloat[$|V_I| = 1000$]{
		\includegraphics[width=0.32\linewidth]{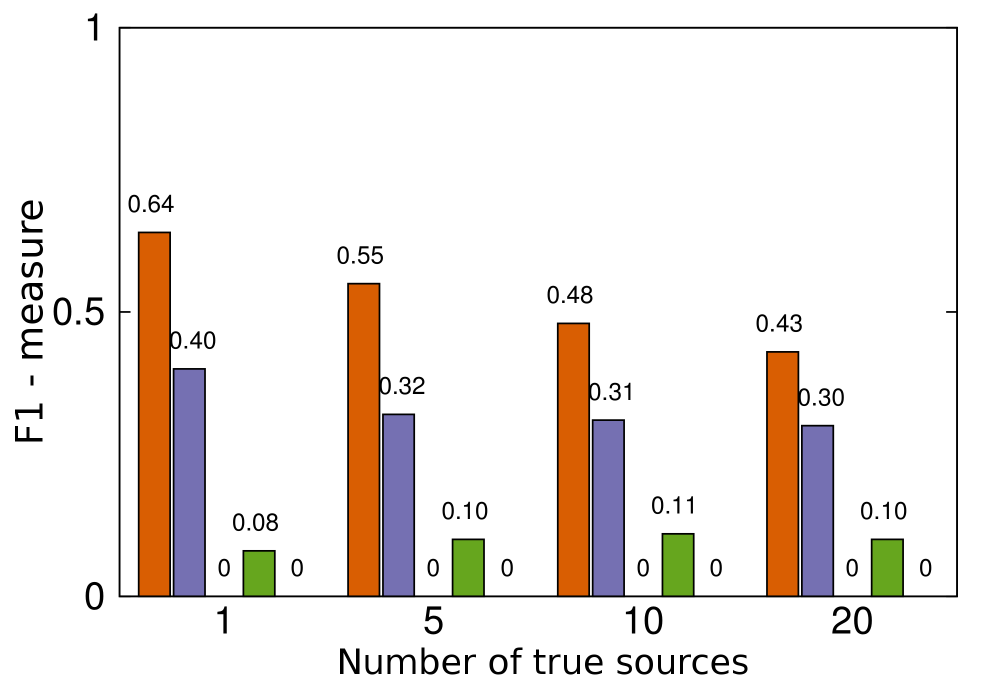}
	}
	\caption{F1-measure scores of different algorithms. Higher is better.}
	\label{fig:sol_per}
\end{figure*}
\begin{figure*}[!ht]
	\subfloat[\#sources = 1]{
		\includegraphics[width=0.24\linewidth]{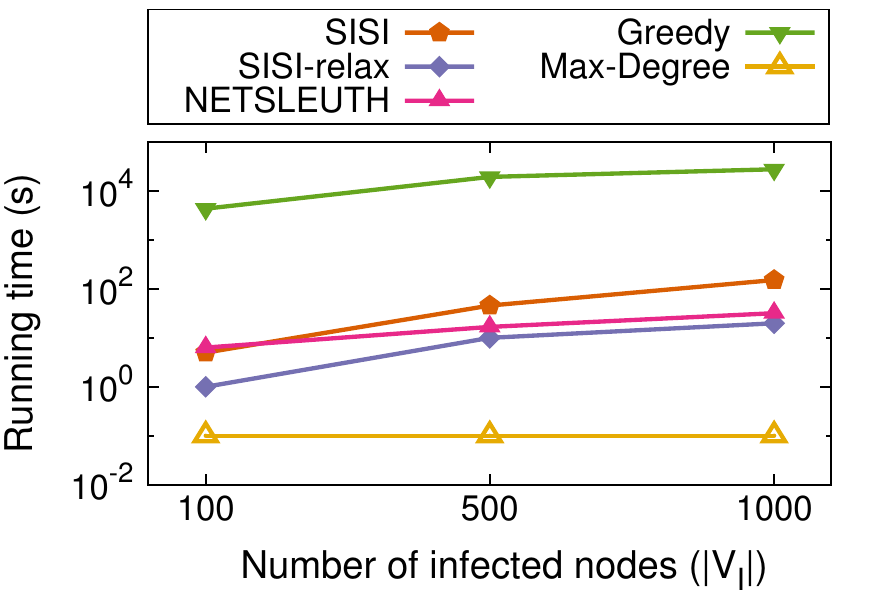}
	}
	\subfloat[\#sources = 5]{
		\includegraphics[width=0.24\linewidth]{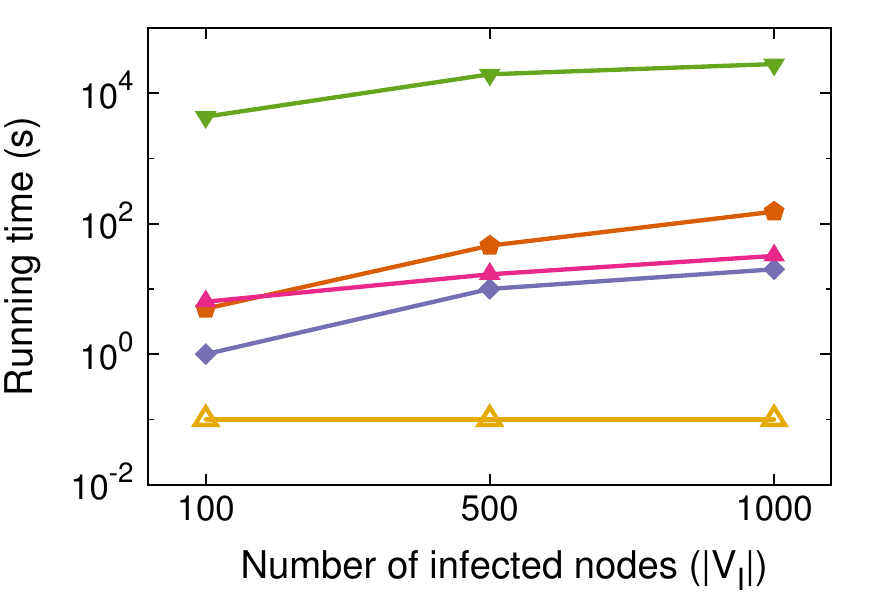}
	}
	\subfloat[\#sources = 10]{
		\includegraphics[width=0.24\linewidth]{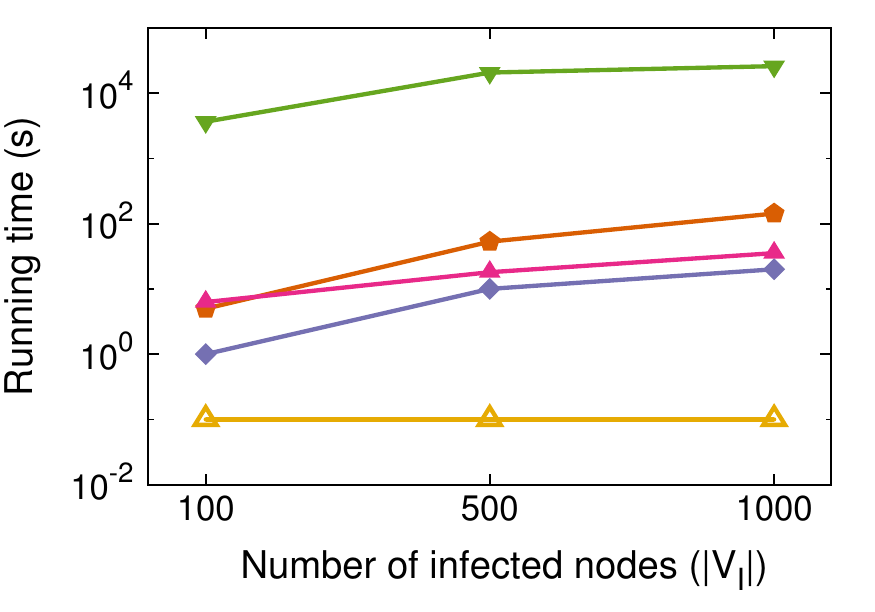}
	}
	\subfloat[\#sources = 20]{
		\includegraphics[width=0.24\linewidth]{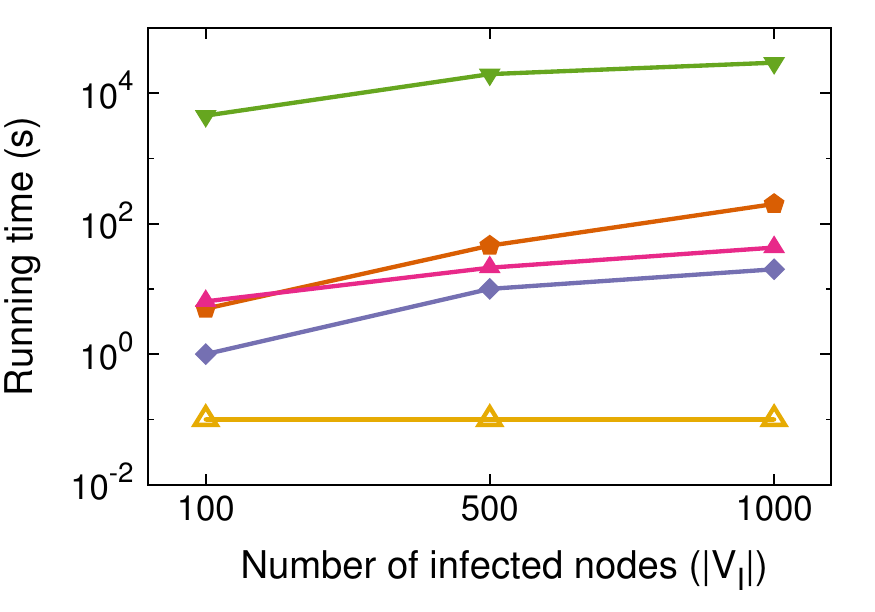}
	}
	\caption{Runtime of the tested algorithms}
	\label{fig:time}
\end{figure*}
\subsection{Time Complexity}
\label{subsec:complex}
This subsection analyzes the time complexity of \SISI{}. We analyze major procedures of the algorithm: 1) submodular-cost covering algorithm and 2) generating \TRR{} sets.
\vspace{-0.02in}
\subsubsection{Submodular-cost covering algorithm}
 Recall that the total sizes of the generated \RR{} sets is $\Lambda$ on the average.  Since the algorithm for solving the procedure to solve submodular-cost covering problem keeps doubling the number of \TRR{} sets after each round,the total complexity of this procedure is bounded loosely by $O(\Lambda^2)$. 

\subsubsection{Generating RR sets}
To determine the time complexity of generating \TRR{} sets in \SISI{}, we need analyze the time spent for generating a single \TRR{} set (Lemma~\ref{lem:ris_com}) and the expected number of \TRR{} sets. Then multiplying two numbers to get the expected total complexity. The following lemma states the complexity results with the proof in our extended version \cite{Nguyen162}.

\vspace{-0.05in}
\begin{Lemma}
	\label{lem:com_rr}
	Let $E_s$ be the set of edges connecting nodes in $V_{I}$ to nodes in $\bar V_{I}$, the complexity of generating \TRR{} sets in \SISI{} is $O(m\Lambda\Delta/|E_s|)$
\end{Lemma}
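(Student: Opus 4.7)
My plan is to decompose the total running time of \TRR{} generation in \SISI{} as (expected cost per \textsf{Fast-TRIS} call) $\times$ (expected number of \TRR{} sets produced), and bound each factor separately. The first factor comes directly from Lemma~\ref{lem:ris_com}: its leading term $\Theta(\Delta m/n)$ dominates the logarithmic correction, so each call costs $O(\Delta m/n)$ in expectation. The second factor is where the cut size $|E_s|$ must enter, and it forms the main technical step of the proof.

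To bound the expected number of samples $N$, I would couple the stopping rule of \SISI{} with the approximation guarantee of \textsf{Submodular-cost-Covering}. The algorithm terminates the first time $f(\hat S) := |\R_{Blue}^{-}(\hat S)| + |\R_{Red}^{+}(\hat S)| \ge \Lambda$. By Theorem~\ref{theo:sub}, $f(\hat S) \le \Delta\,\min_{S \subseteq V_I} f(S) \le \Delta\, f(V_I)$; and $\R_{Blue}^{-}(V_I) = \emptyset$ (every \TRR{} set whose source lies in $V_I$ contains that source, and the source survives truncation), so at termination $|\R_{Red}^{+}(V_I)| \ge \Lambda/\Delta$. Since $\mathbf{1}\{R_j \in \R_{Red}^{+}(V_I)\}$ is an i.i.d.\ Bernoulli with mean $p := \Pr[R_j \in \R_{Red}^{+}(V_I)]$ across the generated samples, a martingale/stopping-time argument in the same flavor as Lemma~\ref{lem:rr} then yields $N = O(\Lambda/(\Delta\, p))$ in expectation.

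The main obstacle is to show $p = \Omega(|E_s|/(n\Delta))$, and I would do so by an edge-counting argument over the cut. For each edge $(u,v) \in E_s$ with $u \in V_I$ and $v \in \bar V_I$, the random source of $R_j$ equals $v$ with probability $1/n$; conditioned on this, $u$ enters $R_j$ whenever its single-step infection attempt on $v$ succeeds, which happens with probability at least $\beta$. Summing the ``$v$ is infected by at least one $V_I$-neighbor in one step'' events over $v \in \bar V_I$, using the concavity estimate $1-(1-\beta)^{d_v} \ge \beta d_v/\Delta$ for $d_v \le \Delta$ (where $d_v$ counts the in-neighbors of $v$ in $V_I$ and the cap corrects for multiple $V_I$-edges entering the same $v$), and noting $\sum_v d_v = |E_s|$, gives $p \ge \Omega(\beta |E_s|/(n\Delta))$. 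Plugging this back yields $N = O(n\Lambda/|E_s|)$, and multiplying by the per-call cost produces the claimed $O(m\Lambda\Delta/|E_s|)$. I expect the delicate step to be precisely this $\Delta$-correction in the edge-count and its clean cancellation with the $\Delta$-factor from submodular-cost covering; the remaining pieces then follow routinely from Lemmas~\ref{lem:ris_com},~\ref{lem:ris}, and~\ref{lem:rr}.
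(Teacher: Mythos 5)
Your decomposition into (cost per sample) $\times$ (number of samples) and your use of Lemma~\ref{lem:ris_com} for the first factor match the paper, and your cut-edge counting is in the right spirit. But the central step bounding the number of samples has the inequality pointing the wrong way. You argue: the algorithm stops when $f(\hat S)\ge\Lambda$; since $f(\hat S)\le\Delta f(V_I)=\Delta|\R_{Red}^{+}(V_I)|$, termination forces $|\R_{Red}^{+}(V_I)|\ge\Lambda/\Delta$; hence $N=O(\Lambda/(\Delta p))$. The implication you have actually established is that $|\R_{Red}^{+}(V_I)|\ge\Lambda/\Delta$ is a \emph{necessary} condition for termination, not a sufficient one. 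By Wald this yields $\E[N]\cdot p\ge\Lambda/\Delta$, i.e.\ a \emph{lower} bound $\E[N]\ge\Lambda/(\Delta p)$ on the stopping time. Nothing in your argument rules out the algorithm running far past the point where $|\R_{Red}^{+}(V_I)|$ reaches $\Lambda/\Delta$ while $f(\hat S)$ still sits below $\Lambda$ (indeed $f(\hat S)$ can be up to a factor $\Delta$ \emph{smaller} than $\Delta f(V_I)$). To get an upper bound on $N$ you need a quantity that \emph{lower}-bounds $f(\hat S)$ and grows linearly in the sample count; the trivial candidate $f(\hat S)\ge\min_S f(S)$ forces you back to controlling $\E[D(S)]$ itself rather than the empirical count for $V_I$.

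That is exactly the route the paper takes: it applies Wald's equation to the stopping sum $\sum_i X_j^i=\Lambda$ to get $\E[|\R|]=\Lambda n/\E[D(\hat S)]$, and then lower-bounds $\E[D(\hat S)]\ge\frac{\beta}{1-\beta}|E_s|$ by a per-cut-edge dichotomy (for each $(u,v)\in E_s$, either $u\in V_I$ fails to be infected, contributing to $|V_I\setminus V(S)|$, or $u$ is infected and then $v\in\bar V_I$ is infected with probability at least $\frac{\beta}{1-\beta}\Pr[\hat S,u]$, contributing to $|\bar V_I\cap V(S)|$). Note this bound lives entirely at the level of the expected symmetric difference of the returned set, so no $\Delta$-approximation factor and no cancellation of $\Delta$'s is needed; the $\Delta$ in the final complexity comes solely from the per-sample cost $\Delta m/n$. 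Your estimate $p\ge\beta|E_s|/(n\Delta)$ is plausibly correct as stated, but it is feeding an unproven upper bound on $N$; if you want to keep your style of argument you should instead lower-bound $\E[D(S)]$ uniformly (or at least for $\hat S$) by $\Omega(\beta|E_s|)$ and invoke Wald on the stopping condition directly.
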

\begin{proof}
	\textit{Generating a RR Set}. As analyzed in Sec.~\ref{sec:submodular}, the expected complexity of generating a single RR set is as follows,
	\begin{align}
		\label{eq:eq12}
		C'(R_j) = \frac{\Delta m}{n} + \Delta\log (\Delta)\log (1 + \frac{\Delta m}{n^2}) \approx \frac{\Delta m}{n}
	\end{align}
	
	\textit{Number of RR set generated}. We will find an upper-bound for the number of \RR{} sets generated by \SISI{}.
	Using Wald's equation \cite{Wald47}, and that $\E[|\R|] < \infty$ we have
	\begin{align}
		\E[|\R|] \mu_{\hat S} = \Lambda
	\end{align}
	\vspace{-0.2in}
	
	\noindent Thus,
	\vspace{-0.15in}
	\begin{align}
		\label{eq:eq10}
		\E[|\R|] = \frac{\Lambda}{\mu_{\hat S}} = \frac{\Lambda n}{\E[D(\hat S,\tau,V_I)]}
	\end{align}
	
	Let $E_{s}$ be the set of edges connecting nodes in $V_I$ to nodes in $\bar V_I$, then we have
	\begin{align}
		\E[D(\hat S,\tau,V_I)] &\geq \sum_{(u,v) \in E_s}\big [(1-\Pr[\hat S,u]) + \Pr[\hat S,v]\big ]
	\end{align}
	where $(1-\Pr[\hat S,u])$ is the probability that $u \in V_I$ is not infected and $\Pr[\hat S,v]$ is the probability that $v \in V_{\bar I}$ is infected. Since $v$ is uninfected and connected with $u$, if $u$ is infected by $\hat S$, then the probability that $v$ gets the infection from $u$ is $\Pr[\hat S, v] = \beta\Pr[\hat S, u]$. Taking into the probability that $u$ is infected at least 1 step before $\tau$, we obtain $\Pr[\hat S, v] \geq \beta\Pr[\hat S, u]/(1-\beta)$ due to the binomial distribution of successes up to $\tau$ and $\tau - 1$. Thus,
	\vspace{-0.2in}
	
	\begin{align}
		& \E[D(\hat S,\tau,V_I)] \geq \sum_{(u,v) \in E_s}(1-\Pr[\hat S,u] + \frac{\beta}{1-\beta}\Pr[\hat S,u]) \nonumber \\
		& = |E_s| - (1-\frac{\beta}{1-\beta})\sum_{(u,v) \in E_s}\Pr[\hat S,u] \geq \frac{\beta}{1-\beta} |E_s|
	\end{align}
	Combining this result with Eq.~\ref{eq:eq10}, we obtain,
	\begin{align}
		\label{eq:eq11}
		\E[|\R|] \leq \frac{(1-\beta) \Lambda n}{\beta |E_s|}
	\end{align}
	
	From Eq.~\ref{eq:eq12} and Eq.~\ref{eq:eq11}, we obtains the complexity of generating \RR{} sets.
\end{proof}

Therefore, the overall complexity of \SISI{} is followed by the subsequent theorem.
\begin{theorem}
	\label{theo:com}
	Let $E_s$ be the set of edges connecting nodes in $V_{I}$ to $\bar V_{I}$, \SISI{} has $O(m\Delta\Lambda/|E_s| + \Lambda^2)$ time complexity.
\end{theorem}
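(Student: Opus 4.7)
The plan is to decompose the total work of \SISI{} (Alg.~\ref{alg:sisi}) into the two dominating costs incurred across all outer iterations: (i) generating \TRR{} sets via \textsf{Fast-TRIS}, and (ii) invoking the submodular-cost covering subroutine (Alg.~\ref{alg:sub}). Summing the two yields the claimed $O(m\Delta\Lambda/|E_s| + \Lambda^2)$ bound.

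For part (i), Lemma~\ref{lem:com_rr} already gives us the expected cost of producing the \emph{final} collection $\R$ of \TRR{} sets as $O(m\Lambda\Delta/|E_s|)$, since the expected per-sample cost is $C'(R_j) = O(\Delta m/n)$ by Lemma~\ref{lem:ris_com} and the total number of samples is bounded by $\E[|\R|] \leq (1-\beta)\Lambda n/(\beta |E_s|)$ from the proof of Lemma~\ref{lem:com_rr}. Because the sets generated in earlier rounds are retained (never regenerated) and the geometric doubling schedule at most doubles the final sample count, the aggregate sampling cost over all rounds remains $O(m\Lambda\Delta/|E_s|)$.

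For part (ii), I would appeal to Lemma~5 of \cite{Koufogiannakis13}, which states that Alg.~\ref{alg:sub} runs in time linear in $\sum_j |R_j|$, the total size of all \TRR{} sets it is given. In round $i$ of the outer loop, \SISI{} calls Alg.~\ref{alg:sub} on a collection whose total size is $O(\Delta\cdot |\R_i|)$, where $|\R_i|$ doubles each round until the stopping condition triggers at some final $|\R_{\text{last}}| = O(\Lambda/\mu_{\hat S})$. Since the last invocation satisfies $\sum_j |R_j| \leq \Delta\cdot |\R_{\text{last}}|$ and the geometric series of all preceding rounds sums to at most twice the last term, the cumulative cost of the covering subroutine is $O(\Delta\cdot\Lambda/\mu_{\hat S})$. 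Using $|R_{Blue}^{-}(\hat S)| + |R_{Red}^{+}(\hat S)| \geq \Lambda$ at termination together with the crude bound $\mu_{\hat S} \geq \Lambda/|\R_{\text{last}}|$, one obtains the loose $O(\Lambda^2)$ bound stated in the theorem.

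Adding the two contributions gives total time $O(m\Delta\Lambda/|E_s| + \Lambda^2)$, as required. The main subtlety I expect is part (ii): one must verify that the repeated calls to Alg.~\ref{alg:sub} across doubling rounds telescope into a single $O(\Lambda^2)$ term rather than accumulating an additional logarithmic factor, which follows because each call is itself linear in its input and the input sizes form a geometric progression.
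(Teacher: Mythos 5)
Your decomposition is exactly the paper's: it proves Theorem~\ref{theo:com} by splitting the cost into (a) generating the \TRR{} sets, bounded by $O(m\Delta\Lambda/|E_s|)$ via Lemma~\ref{lem:com_rr} (expected per-sample cost $C'(R_j)\approx \Delta m/n$ from Lemma~\ref{lem:ris_com}, times $\E[|\R|]\leq (1-\beta)\Lambda n/(\beta|E_s|)$), and (b) the repeated calls to the submodular-cost covering routine, which it bounds ``loosely by $O(\Lambda^2)$'' by asserting that the total size of the generated \TRR{} sets is $\Lambda$ on average and invoking the doubling schedule. Your part (i) matches this and is fine, including the observation that the geometric schedule only doubles the final sampling cost.

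The gap is in the last step of your part (ii). From the cumulative covering cost $O(\Delta\cdot\Lambda/\mu_{\hat S})$, substituting your ``crude bound'' $\mu_{\hat S}\geq \Lambda/|\R_{\mathrm{last}}|$ yields $O(\Delta\cdot|\R_{\mathrm{last}}|)$ --- i.e., you recover exactly the total input size you started from, not $O(\Lambda^2)$. That inequality is just the stopping condition read backwards, so the step is circular: to reach $O(\Lambda^2)$ you would need $\Delta\,|\R_{\mathrm{last}}| = O(\Lambda^2)$, equivalently $\E[|\R|] = O(\Lambda^2/\Delta)$ or $\mu_{\hat S}\geq \Delta/\Lambda$, and nothing in your argument (nor in Lemma~\ref{lem:rr} or Lemma~\ref{lem:com_rr}) supplies this. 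To be fair, the paper's own justification of the $\Lambda^2$ term is no more rigorous --- it simply asserts the average total size of the \TRR{} collection is $\Lambda$ and calls the resulting bound ``loose'' --- but as written your derivation does not close either. You should either establish a bound of the form $\E[\sum_j |R_j|] = O(\Lambda^2)$ (e.g., by relating $\E[|R_j|]$ to $\mu_{\hat S}$), or explicitly flag this term as inherited from the paper's unproven claim rather than derived.
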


From Theo.~\ref{theo:com}, we see that the complexity depends on the number of connections from infected set to the outside world $|E_s|$. That is if there are many infected nodes connected to uninfected nodes, it is easier for \SISI{} to find the sources and vice versus, if only few such connections, \SISI{} requires more time. 
\vspace{-0.05in}

\section{Experiments}
\label{sec:exp}
In this section, we  study the empirical performance of \SISI{} and compare it with the current state-of-the-art methods under the popular \SI{} and \IC{} infection models. We show that \SISI{} outperform the others in terms of detection quality, revealing major of the infection sources. In contrast, the other methods rarely find any true source of the infection. 
\setlength\tabcolsep{5pt}
\def\arraystretch{1}
\begin{table*}[!ht]\centering
	\begin{tabular}{cl cccc cccc cccc}
		\toprule
		\multicolumn{2}{c}{\textbf{$|V_I|$}} & \multicolumn{4}{c}{$100$} & \multicolumn{4}{c}{$500$} & \multicolumn{4}{c}{$1000$}\\
		\cmidrule(r){3-6}\cmidrule(r){7-10}\cmidrule(r){11-14}
		\multicolumn{2}{c}{\textit{\#sources}} & $1$ & $5$ & $10$ & $20$ & $1$ & $5$ & $10$ & $20$ & $1$ & $5$ & $10$ & $20$\\
		\midrule
		\multirow{6}{0.1\textwidth}{\textbf{Symmetric Difference} (smaller is better)} & {\color{red}\textbf{Ground-truth}} & {\color{red}\textbf{205}} & {\color{red}\textbf{173}} & {\color{red}\textbf{156}} & {\color{red}\textbf{134}} & {\color{red}\textbf{1006}} & {\color{red}\textbf{945}} & {\color{red}\textbf{938}} & {\color{red}\textbf{767}} & {\color{red}\textbf{2026}} & {\color{red}\textbf{1835}} & {\color{red}\textbf{1945}} & {\color{red}\textbf{1520}} \\
		& \SISI{} & \textbf{211} & \textbf{181} & \textbf{168} & \textbf{142} & \textbf{1013} & \textbf{962} & \textbf{971} & \textbf{792} & \textbf{2049} & \textbf{1873} & \textbf{1959} & \textbf{1541} \\
		& \SISI{}\textsf{-relax} & 246 & 215 & 218 & 202 & 1141 & 993 & 1084 & 854 & 2179 & 1903 & 2012 & 1696\\
		& \NETSLEUTH{} & 294 & 273 & 280 & 247 & 1258 & 1147 & 1193 & 971 & 2297 & 2095 & 2248 & 1751 \\
		& \textsf{Greedy} & 261 & 226 & 231 & 219 & 1152 & 1015 & 1067 & 914 & 2218 & 2214 & 2124 & 1707\\
		& \textsf{Max-Degree} & 281 & 325 & 418 & 387 & 1195 & 1091 & 1206 & 1105 & 2221 & 2167 & 2182 & 1876\\
		\bottomrule
		\multirow{6}{0.1\textwidth}{\textbf{Jaccard Distance} (larger is better)} &
		{\color{red}\textbf{Ground-truth}} & {\color{red}\textbf{1}} & {\color{red}\textbf{1}} & {\color{red}\textbf{1}} & {\color{red}\textbf{1}} & {\color{red}\textbf{1}} & {\color{red}\textbf{1}} & {\color{red}\textbf{1}} & {\color{red}\textbf{1}} & {\color{red}\textbf{1}} & {\color{red}\textbf{1}} & {\color{red}\textbf{1}} & {\color{red}\textbf{1}} \\
		& \SISI{} & \textbf{0.92} & \textbf{0.98} & \textbf{0.96} & \textbf{0.97} & \textbf{0.99} & \textbf{0.95} & \textbf{0.82} & \textbf{0.94} & \textbf{0.96} & \textbf{0.97} & \textbf{0.97} & \textbf{0.95} \\
		& \SISI{}\textsf{-relax} & 0.76 & 0.72 & 0.65 & 0.71 & 0.81 & 0.79 & 0.72 & 0.89 & 0.86 & 0.68 & 0.72 & 0.73\\
		& \NETSLEUTH{} & 0.21 & 0.24 & 0.31 & 0.37 & 0.16 & 0.29 & 0.26 & 0.41 & 0.20 & 0.17 & 0.18 & 0.21 \\
		& \textsf{Greedy} & 0.32 & 0.19 & 0.39 & 0.35 & 0.26 & 0.28 & 0.34 & 0.37 & 0.22 & 0.26 & 0.21 & 0.19\\
		& \textsf{Max-Degree} & 0.32 & 0.35 & 0.24 & 0.29 & 0.24 & 0.27 & 0.26 & 0.18 & 0.14 & 0.16 & 0.17 & 0.12\\
		\bottomrule
	\end{tabular}
	\caption{Comparison on Symmetric Difference and Jaccard-based Distance of different methods.}
	\label{tab:sym_jac}
\end{table*}
\subsection{Experimental Settings}
\subsubsection{Algorithms compared} 
Under the \SI{} model, we compare three groups of methods:
\begin{itemize}
	\item \SISI{}, a relaxed version of \SISI{}, termed \SISI\textsf{-relax}, in which we relax the approximation guarantee of \SISI{} by replacing $(k\ln2)$ in $\Upsilon$ by a smaller constant $\ln(2\times k)$ and the natural naive \textsf{Greedy} algorithm which iteratively selects one node at a time that commits the largest marginal decrease of symmetric difference. The purpose of designing \SISI{}\textsf{-relax} is to test the empirical performance changes if we have fewer \TRR{} sets.
	\item \NETSLEUTH{} \cite{Prakash12} which is the existing best algorithm in general graphs however it fails to provide any guarantee on solution quality.
	\item \textsf{Max-Degree} based method which ranks node degrees and iteratively selects nodes with highest degree until increasing the symmetric difference as the solution.
\end{itemize}

Under the \IC{} model, we compare \SISI{} with \textsf{k-effector} \cite{Lappas10} and the naive \textsf{Max-Degree} algorithm on \IC{} model.

For \SISI{} and \SISI\textsf{-relax}, we set the parameters $\epsilon=0.1, \delta=0.01$. For \textsf{k-effector}, $k$ is set to the number of true sources.
\subsubsection{Quality measures}
To evaluate the solution quality, we adopt three measures:
\begin{itemize}
	\item Symmetric difference ($\E[D(S,\tau,V_I)]$) which is separately calculated with high accuracy ($\epsilon = 0.01, \delta = 0.001$) through generating random \RR{} sets as in Subsection \ref{subsec:ris}.
	\item Jaccard distance based $Q_{JD}$ \cite{Prakash12}:
	\begin{align}
		Q_{JD}(S) = \frac{\E[JD_S(V_I)]}{\E[JD_{S^*}(V_I)]}
	\end{align}
	where $\E[JD_S(V_I)]$ is the average Jaccard distance of $S$ w.r.t. $V_I$ and computed by generating many (10000 in our experiments) infection simulations from $S$ and averaging over the Jaccard similarities between the infected sets and $V_I$. $S^*$ contains the true sources.
	\item F1-measure:
	\begin{align}
	PR(S) = \frac{|S \cap \{\text{true sources}\}|}{2|S|} + \frac{|S \cap \{\text{true sources}\}|}{2|\{\text{true sources}\}|} \nonumber
	\end{align}
	
	This accurately captures our ultimate goal of \ISI{} problem: finding both the true sources and the correct number of sources. We also define true source detection rate (\%) as $100\frac{|S \cap \{\text{true sources}\}|}{|\{\text{true sources}\}|}$.
\end{itemize}
Both $Q_{JD}(S)$ and $PR(S)$ are ranging in $[0,1]$ and larger is better. $\E[D(S,\tau,V_I)]$ is nonnegative and smaller is better.
\subsubsection{Datasets} For experimental purposes, we select a moderate-size real network - NetHEPT with 15233 nodes and 62796 edges that is actually the largest dataset ever tested on \ISI{} problem. We comprehensively carry experiments on NetHEPT with various numbers of sources $\{1, 5, 10, 20\}$, chosen uniformly random, and the propagation time $\tau$ is chosen so that the infection sizes reach (or exceed) predefined values in the set $\{100, 500, 1000\}$. For each pair of the two values, we generated 10 random test cases with $\beta = 0.05$ and then ran each method on these random tests and took the average of each quality measure over 10 such results.

\subsubsection{Testing Environments} We implement \SISI{}, \SISI\textsf{-relax}, \textsf{Greedy} and \textsf{Max-Degree} methods in C++, \NETSLEUTH{} is in Matlab code and obtained from the authors of \cite{Prakash12}. We experiment on a Linux machine with an 8 core 2.2 GHz CPU and 100GB RAM.
\vspace{-0.05in}
\subsection{Experiments on real network and SI model}
\textbf{Comparing solution quality.} The solution quality measured are the  true infection sources discovery rate, symmetric difference (our objective) and Jaccard-based distance \cite{Prakash12}

\textit{True source discovery.} Fig.~\ref{fig:sol_per} reports the F1-measure scores of the tested algorithms. Note that this score has not been used in previous works \cite{Prakash12,Luo13} since previous methods can only find nodes that are within few hops from the sources, but not the sources themselves. 
As shown in the figure, \SISI{} and \SISI\textsf{-relax} have the best performance. More than 50\% of the true sources was discovered by \SISI{} and 35\% by \SISI\textsf{-relax} that exquisitely surpass \NETSLEUTH{}, \textsf{Max-Degree} with 0\% and \textsf{Greedy} with roughly 10\%.

\setlength\tabcolsep{3pt}
\def\arraystretch{1}
\begin{table}[!ht]\centering
	\begin{tabular}{c ccccc}
		\toprule
		\#src & \multicolumn{1}{c}{\SISI{}} & \multicolumn{1}{c}{\SISI{}\textsf{\textsf{-relax}}} & \multicolumn{1}{c}{\textsf{NETS.}} & \multicolumn{1}{c}{\textsf{Greedy}} & \multicolumn{1}{c}{\textsf{Max-Degree}}\\
		\midrule
		1 & \textbf{91.4} & 84.2 & 0 & 14.5 & 0\\
		5 & \textbf{79.7} & 53.9 & 0 & 15.2 & 0\\
		10 & \textbf{74.1} & 52.3 & 0 & 11.8 & 0\\
		20 & \textbf{77.3} & 56.5 & 0 & 9.6 & 0\\
		\bottomrule
	\end{tabular}
	\caption{True sources detected (\%) with $|V_I| = 1000$.}
	\label{tab:true_pos}
\end{table}
We also present the true source detected rates of different methods in Tab.~\ref{tab:true_pos} since this is an important aspect (positive rate) of \ISI{} problem. The table shows accurate detection of \SISI{} and \SISI{}\textsf{-relax}. More than 70\% and 50\% of true sources are identified by \SISI{} and \SISI{}\textsf{-relax} respectively while \NETSLEUTH{} and \textsf{Max-Degree} cannot detect any source.

\textit{Symmetric difference.} Tab.~\ref{tab:sym_jac} shows the $\E[D(S,\tau,V_I)]$ values where $S$ is the returned solution of each algorithm with various number of true sources and sizes of infection cascades. In all the cases, \SISI{} largely outperforms the other methods and obtains very close values to the true sources. The superiority of \SISI{} against the \SISI\textsf{-relax} and \textsf{Greedy} confirms the good solution guarantee of \SISI{}. \NETSLEUTH{} and \textsf{Max-Degree} optimize different criteria, i.e., description length (MDL) and node degree, and thus show poor performance in terms of symmetric difference. \SISI\textsf{-relax} is consistently the second best method and preserves very well the performance of \SISI{}. 

\textit{Jaccard distance.} We use $Q_{JD}(S)$ as in \cite{Prakash12} to evaluate the algorithms and plot the results in Tab.~\ref{tab:sym_jac}. In this case, the closer value of $Q_{JD}(S)$ to 1 indicates better solution. In terms of $Q_{JD}(S)$, we observe the similar phenomena as measured by symmetric difference that \SISI{} achieve drastically better solution than the others and the results of \SISI\textsf{-relax} approach those of \SISI{} very well with much fewer \TRR{} sets.

\textbf{Comparing running time.}
Fig.~\ref{fig:time} illustrates the running time of the algorithms in the previous experiments. We see that \SISI{} is slower than \NETSLEUTH{} and \SISI{}\textsf{-relax} but the differences are minor while it provides by far better accuracy than other algorithms. \SISI{}\textsf{-relax} obtains possibly the best balance among all: faster than \NETSLEUTH{} and providing good solution quality as shown previously.
\subsection{Experiments on the IC model}
\textbf{Set up.} We compare \SISI{} with the dynamic programming algorithm, temporarily called \textsf{k-effector}, in \cite{Lappas10} when the infection process follows the IC model. Similar to other experiments, we simulate the infection process under the IC model with 4 different numbers of sources, i.e., 1, 5, 10, 20 and run \SISI{} and \textsf{k-effector} on the resulting cascades. For each setting, we carry 10 simulations and report the average results. 
Note that the solution for \textsf{k-effector} in \cite{Lappas10} requires the number of sources as an additional input parameter and for simplicity, we provide the true number of sources used in the simulation processes. \SISI{}, however, do not require this information. We report the results in Table~\ref{tab:ic_model}.
\setlength\tabcolsep{2pt}
\def\arraystretch{1}
\begin{table}[!ht]\centering
 	\begin{tabular}{c ccc ccc}
 		\toprule
	 	\multirow{2}{0.04\textwidth}{\#src} & \multicolumn{3}{c}{Symmetric Difference} & \multicolumn{3}{c}{F1-measure} \\
	 	\cmidrule(r){2-4}\cmidrule(r){5-7}
	 	& \multicolumn{1}{c}{\SISI{}} & \multicolumn{1}{c}{\textsf{k-effector}} & \multicolumn{1}{c}{\textsf{Max-Deg.}} & \multicolumn{1}{c}{\SISI{}} & \multicolumn{1}{c}{\textsf{k-effector}} & \textsf{Max-Deg.} \\
	 	\midrule
	 	1 & \textbf{6.6} & 18.4 & 42.3 & \textbf{0.57} & 0 & 0.02\\
	 	5 & \textbf{55.1} & 103.4 & 176.9 & \textbf{0.53} & 0.02 & 0\\
	 	10 & \textbf{25.2} & 72.6 & 154.1& \textbf{0.49} & 0.03 & 0\\
	 	20 & \textbf{203.7} & 295.2 & 384.7& \textbf{0.52} & 0.05 & 0.03\\
 		\bottomrule
 	\end{tabular}
 	\caption{Comparison under the \IC{} model.}
 	\label{tab:ic_model}
\end{table}

 
\textbf{Results.} It is clear from Table~\ref{tab:ic_model} that \SISI{} massively outperforms \textsf{k-effector} in terms of both symmetric difference and true source recovering ability. In summary, for any value of the number of true sources $k$, \SISI{} always returns solution with symmetric difference equal half of the one returned by \textsf{k-effector}. In terms of true source discovery ability, while \textsf{k-effector} almost detects none of the true sources, \SISI{} consistently achieves the F1-measure of at least 50\%.


\section{Discussion and Conclusion}
\label{sec:conclusion}
We present \SISI{} the first approximation algorithm for multiple source detection in general graphs which also works very well in practice. The algorithm can be extended to several other diffusion models and settings with little modification  on the sampling procedure as outlined below.
%

\emph{Incomplete Observation \cite{Farajtabar15, Karamchandani13}.} In many cases,  we can only observe the states (infected/not infected) for a subset $O \subsetneq V$ of nodes in the network. In those cases, we need to modify the Fast \TRIS{} sampling Algorithm in Line 1 and pick a node $u$ uniformly in $O$ (instead of $V$) and allow the sources to be from $V_I$ or unknown state nodes.

However, the \SISI{} cannot be directly adapted to non-progessive models in which a node can switch from an infected state into uninfected state. Thus approximation algorithm for source detection in non-progressive models leaves an open question and is among our future work.
\vspace{-0.05in}

\bibliographystyle{abbrv} 
\bibliography{infection,dijkstra,nphard,social,pids,isi}

\begin{thebibliography}{10}

\bibitem{Swine09}
\url{http://www.pcworld.com/article/163920/swine_flu_twitter.html}.

\bibitem{Fox13}
\url{http://fox13now.com/2013/04/24/the-power-of-one-wrong-tweet/}.

\bibitem{Boston13}
\url{http://www.businessinsider.com/reddit-falsely\-accuses-sunil-tripathi-of-boston-bombing-2013-7
  }.

\bibitem{Nguyen162}
Multiple infection sources identification with provable guarantees.
\newblock
  \url{https://www.dropbox.com/s/96gpow3pfcp2h0p/cikm16_infection_extension.pdf?dl=0}.

\bibitem{Anderson92}
R.~M. Anderson, R.~M. May, and B.~Anderson.
\newblock {\em Infectious diseases of humans: dynamics and control}, volume~28.
\newblock Wiley Online Library, 1992.

\bibitem{Borgs14}
C.~Borgs, M.~Brautbar, J.~Chayes, and B.~Lucier.
\newblock Maximizing social influence in nearly optimal time.
\newblock In {\em SODA}, pages 946--957. SIAM, 2014.

\bibitem{Carr00}
R.~D. Carr, S.~Doddi, G.~Konjevod, and M.~V. Marathe.
\newblock On the red-blue set cover problem.
\newblock In {\em SODA}, pages 345--353. Citeseer, 2000.

\bibitem{Chen13}
W.~Chen, L.~V. Lakshmanan, and C.~Castillo.
\newblock Information and influence propagation in social networks.
\newblock {\em Synthesis Lectures on Data Management}, 5(4):1--177, 2013.

\bibitem{Chen14}
Z.~Chen, K.~Zhu, and L.~Ying.
\newblock Detecting multiple information sources in networks under the sir
  model.
\newblock In {\em CISS 48th}, pages 1--4. IEEE, 2014.

\bibitem{Dagum00}
P.~Dagum, R.~Karp, M.~Luby, and S.~Ross.
\newblock An optimal algorithm for monte carlo estimation.
\newblock {\em SIAM J. Comput.}, 29(5):1484--1496, Mar. 2000.

\bibitem{Dong13}
W.~Dong, W.~Zhang, and C.~W. Tan.
\newblock Rooting out the rumor culprit from suspects.
\newblock In {\em ISIT}, pages 2671--2675. IEEE, 2013.

\bibitem{Farajtabar15}
M.~Farajtabar, M.~Gomez-Rodriguez, N.~Du, M.~Zamani, H.~Zha, and L.~Song.
\newblock Back to the past: Source identification in diffusion networks from
  partially observed cascades.
\newblock In {\em AISTATS: Proc. of the 18th Int. Cont. on AI and Stats}, 2015.

\bibitem{Goldberg96}
A.~V. Goldberg and R.~E. Tarjan.
\newblock Expected performance of dijkstra’s shortest path algorithm.
\newblock {\em NEC Research Institute Report}, 1996.

\bibitem{Goyal10}
A.~Goyal, F.~Bonchi, and L.~V. Lakshmanan.
\newblock Learning influence probabilities in social networks.
\newblock In {\em WSDM}, WSDM, pages 241--250. ACM, 2010.

\bibitem{Karamchandani13}
N.~Karamchandani and M.~Franceschetti.
\newblock Rumor source detection under probabilistic sampling.
\newblock In {\em ISIT}, pages 2184--2188. IEEE, 2013.

\bibitem{Kempe03}
D.~Kempe, J.~Kleinberg, and {\'E}.~Tardos.
\newblock Maximizing the spread of influence through a social network.
\newblock In {\em 9th SIGKDD}, pages 137--146. ACM, 2003.

\bibitem{Kimura09}
M.~Kimura, K.~Saito, and H.~Motoda.
\newblock Blocking links to minimize contamination spread in a social network.
\newblock {\em ACM Trans. Knowl. Discov. Data}, 3(2):9:1--9:23, 2009.

\bibitem{Koufogiannakis13}
C.~Koufogiannakis and N.~E. Young.
\newblock Greedy $\delta$-approximation algorithm for covering with arbitrary
  constraints and submodular cost.
\newblock {\em Algorithmica}, 66(1):113--152, 2013.

\bibitem{Kutzkov13}
K.~Kutzkov, A.~Bifet, F.~Bonchi, and A.~Gionis.
\newblock Strip: stream learning of influence probabilities.
\newblock In {\em Proceedings of the 19th ACM SIGKDD}, pages 275--283. ACM,
  2013.

\bibitem{Lappas10}
T.~Lappas, E.~Terzi, D.~Gunopulos, and H.~Mannila.
\newblock Finding effectors in social networks.
\newblock In {\em 16th SIGKDD}, pages 1059--1068. ACM, 2010.

\bibitem{Liu10}
L.~Liu, J.~Tang, J.~Han, M.~Jiang, and S.~Yang.
\newblock Mining topic-level influence in heterogeneous networks.
\newblock In {\em Proc. of the 19th ACM CIKM}, pages 199--208. ACM, 2010.

\bibitem{Lokhov14}
A.~Y. Lokhov, M.~M{\'e}zard, H.~Ohta, and L.~Zdeborov{\'a}.
\newblock Inferring the origin of an epidemic with a dynamic message-passing
  algorithm.
\newblock {\em Physical Review E}, 90(1):012801, 2014.

\bibitem{Luo13}
W.~Luo, W.~P. Tay, and M.~Leng.
\newblock Identifying infection sources and regions in large networks.
\newblock {\em Signal Processing, IEEE Tran. on}, 61(11):2850--2865, 2013.

\bibitem{Nguyen16}
H.~T. Nguyen, M.~T. Thai, and T.~N. Dinh.
\newblock Cost-aware targeted viral marketing in billion-scale networks.
\newblock In {\em INFOCOM, Proceedings}. IEEE, 2016.

\bibitem{Prakash12}
B.~A. Prakash, J.~Vreeken, and C.~Faloutsos.
\newblock Spotting culprits in epidemics: How many and which ones?
\newblock In {\em ICDM}, pages 11--20. IEEE, 2012.

\bibitem{Shah11}
D.~Shah and T.~Zaman.
\newblock Rumors in a network: Who's the culprit?
\newblock {\em Information Theory, IEEE Tran. on}, 57(8):5163--5181, 2011.

\bibitem{Shah12}
D.~Shah and T.~Zaman.
\newblock Rumor centrality: a universal source detector.
\newblock In {\em SIGMETRICS Performance Evaluation}, volume~40, pages
  199--210. ACM, 2012.

\bibitem{Tang09}
J.~Tang, J.~Sun, C.~Wang, and Z.~Yang.
\newblock Social influence analysis in large-scale networks.
\newblock In {\em Proc. of the 15th ACM SIGKDD}, pages 807--816. ACM, 2009.

\bibitem{Tang15}
Y.~Tang, Y.~Shi, and X.~Xiao.
\newblock Influence maximization in near-linear time: A martingale approach.
\newblock SIGMOD '15, pages 1539--1554. ACM, 2015.

\bibitem{Wald47}
A.~Wald.
\newblock {\em Sequential Analysis}.
\newblock John Wiley and Sons, 1947.

\end{thebibliography}

\balance
\end{document}